\begin{document}

\newcommand{\drie}{\vartriangleleft}
\newcommand{\qup}{\underline{\mathbb{Q}}^{+}}
\newcommand{\qui}{\underline{\mathbb{Q}[i]}}
\newcommand{\uS}{\underline{\Sigma}}
\newcommand{\uO}{\underline{\Omega}}
\newcommand{\uN}{\underline{N}}
\newcommand{\uA}{\underline{A}}
\newcommand{\uB}{\underline{B}}
\newcommand{\topos}{[\mathcal{C}^{\text{op}},\mathbf{Set}]}
\newcommand{\idl}{\mathcal{I}(\mathcal{C})}
\newcommand{\fil}{\mathcal{F}(\mathcal{C})}
\newcommand{\Cd}{\mathcal{C}_{\downarrow}}
\newcommand{\Cu}{\mathcal{C}_{\uparrow}}
\newcommand{\SdA}{\Sigma_{A}^{\downarrow}}
\newcommand{\SdB}{\Sigma_{B}^{\downarrow}}
\newcommand{\SuA}{\Sigma_{A}^{\uparrow}}
\newcommand{\SuB}{\Sigma_{B}^{\uparrow}}

\newtheorem{dork}{Definition}[section]
\newtheorem{tut}[dork]{Theorem}
\newtheorem{poe}[dork]{Proposition}
\newtheorem{lem}[dork]{Lemma}
\newtheorem{cor}[dork]{Corollary}
\newtheorem{con}[dork]{Conjecture}
\newtheorem{rem}[dork]{Remark}
\newtheorem{exa}[dork]{Example}

\title{Independence Conditions for Nets of Local Algebras as Sheaf Conditions}
 \author{Sander Wolters\footnote{Radboud Universiteit Nijmegen, Institute for Mathematics, Astrophysics, and Particle Physics, the Netherlands. s.wolters@math.ru.nl. Supported by
N.W.O. through project 613.000.811.} \and Hans Halvorson\footnote{Department of Philosophy, Princeton University, Princeton NJ, US. hhalvors@princeton.edu}}
\maketitle
\begin{center}
\textit{Dedicated to Joost Nuiten}
\end{center}

\begin{abstract}
We apply constructions from topos-theoretic approaches to quantum theory to algebraic quantum field theory. Thus a net of operator algebras is reformulated as a functor that maps regions of spacetime into a category of ringed topoi. We ask whether this functor is a sheaf, a question which is related to the net satisfying certain kinematical independence conditions. In addition, we consider a C*-algebraic version of Nuiten's recent sheaf condition, and demonstrate how it relates to C*-independence of the underlying net of operator algebras.
\end{abstract}

\section{Introduction and Motivation} \label{sec: intro}

In this section we briefly describe and motivate the material in the later sections. The reader is assumed to have some familiarity with topos theory and C*-algebras.

\subsection{Topos Models for Quantum Physics}

To be more precise, by topos-theoretic approaches to quantum physics we mean work inspired by the ideas of Butterfield and Isham, \cite{butish1,butish2,buhais,butish3}, who show that elements of topos theory appear naturally in the foundations of quantum mechanics. This work, which was further developed by D\"oring and Isham~\cite{di1,di2,di3,di4} concentrates on a particular topos, namely a category of presheaves. Typically, this topos is defined using von Neumann algebras, but we shall make use of the larger class of unital C*-algebras\footnote{The additional structure of von Neumann algebras is important when considering daseinisation~\cite{di2}. As we do not consider daseinisation here, C*-algebras give greater generality.}. We concentrate on a quantum system described by a unital C*-algebra $A$. Let $\mathcal{C_{A}}$ (or simply $\mathcal{C}$) denote the set of unital commutative C*-subalgebras of $A$. The elements of $\mathcal{C}$ are called \textbf{contexts}, and we consider $\mathcal{C}$ as a poset, with the order given by inclusion. The topos used by Butterfield, D\"oring and Isham is the category $[\mathcal{C}^{op},\mathbf{Set}]$ of presheaves on $\mathcal{C}$.

The work of Isham and D\"oring shows great ambition in using topos theory for physics. One central idea~\cite{di1,ish} is that any theory of physics, in its mathematical formulation should have certain structures, which make the mathematical framework resemble that of classical physics in some sense. This, in turn, should help in giving some, hopefully not-naive, realist account of the theory. In addition to restricting the mathematical shape of physical theories, considerable freedom is added by allowing these structures to be interpreted in topoi other than the category of sets. In this direction, the motivating examples are the presheaf models $[\mathcal{C}^{op}_{A},\mathbf{Set}]$, for varying $A$. 

A different topos model for quantum theories is given by the copresheaf model introduced by Heunen, Landsman and Spitters~\cite{hls}. This model, which will be used extensively in later sections, uses the topos $[\mathcal{C}_{A},\mathbf{Set}]$ of copresheaves\footnote{i.e. the category covariant functors $\mathcal{C}_{A}\to\mathbf{Set}$ and their natural transformations}. An interesting feature of the covariant model of Heunen et al. is that it relies on the internal language of the copresheaf topos. A topos, which is a highly structured category, can be thought of as a mathematical universe of discourse. It is a place where we can do mathematics, in a way that strongly resembles set-based mathematics. It is from this (internal) perspective that the copresheaf topos model of the quantum system resembles classical physics. The presheaf model by Isham et.al. turns out to be closely related to the copresheaf model~\cite{wollie}, and hence also allows such an internal perspective which resembles classical physics~\cite{wollie2}.

We concentrate mostly on the object of $[\mathcal{C}_{A},\mathbf{Set}]$, defined by the functor
\begin{equation*}
\underline{A}:\mathcal{C}\to\mathbf{Set}\ \ \ \underline{A}(C)=C,
\end{equation*}
where, if $D\subseteq C$, the corresponding arrow in $\mathcal{C}$ is mapped to the inclusion $D\hookrightarrow C$. A key observation is that from the internal perspective of the topos $[\mathcal{C}_{A},\mathbf{Set}]$ this object is a unital commutative C*-algebra~\cite{hls}. 

\subsection{Nets of Operator Algebras as Functors}

Let $\mathcal{V}(X)$ denote a poset of causally complete opens of a spacetime manifold $X$, partially ordered by inclusion. We assume that $\mathcal{V}(X)$, when ordered by inclusion, is a lattice\footnote{That is, any pair $O_{1},O_{2}\in\mathcal{V}(X)$ has a greatest lower bound $O_{1}\wedge O_{2}$, and a least upper bound $O_{1}\vee O_{2}$.}. Assume that we have a mapping $O\mapsto A(O)$, associating to each $O\in\mathcal{V}(X)$ a unital C*-algebra $A(O)$. Further assume that the map $O\mapsto A(O)$ is isotonic, in the sense that if $O_{1}\subseteq O_{2}$, then $A(O_{1})\subseteq A(O_{2})$. Such a mapping $O\mapsto A(O)$ is called a \textit{net of operator algebras}.

For each $O\in\mathcal{V}(X)$, the topos approach of~\cite{hls} provides us with a pair $(\mathcal{T}_{O},\underline{A(O)})$ consisting of a topos $\mathcal{T}_{O}=[\mathcal{C}_{A(O)},\mathbf{Set}]$ and a unital commutative C*-algebra $\underline{A(O)}$ in this topos.

As shown in detail in Section~\ref{sec: sheaf}, if $O_{1}\subseteq O_{2}$, we can associate to this inclusion a pair
\begin{equation} \label{equ: arroe}
(I,\underline{i}):(\mathcal{T}_{O_{2}},\underline{A(O_{2})})\to(\mathcal{T}_{O_{1}},\underline{A(O_{1})}),
\end{equation}
where $I:\mathcal{T}_{O_{2}}\to\mathcal{T}_{O_{1}}$ is a geometric morphism\footnote{A geometric morphism is a morphism of topoi, defined as an adjunction $I^{\ast}\dashv I_{\ast}$, where the left adjoint $I^{\ast}:\mathcal{T}_{O_{2}}\to\mathcal{T}_{O_{1}}$ is left-exact.}, and $\underline{i}: I^{\ast}\underline{A(O_{1})}\to\underline{A(O_{2})}$ is a $\ast$-homomorphism , internal to the topos $\mathcal{T}_{O_{2}}$.

Associating to each $O$ the pair $(\mathcal{T}_{O},\underline{A(O)})$, and to each inclusion $O_{1}\subseteq O_{2}$ a pair of arrows (\ref{equ: arroe}), a net $O\mapsto A(O)$ defines a contravariant functor from $\mathcal{V}(X)$ to the category $\mathbf{ucCTopos}$ of topoi with internal unital commutative C*-algebras. However, for technical reasons a different target category $\mathbf{RingSp}$ is used. This category differs from $\mathbf{ucCTopos}$ in two ways. Instead of using all (Grothendieck) topoi, only topoi of the form $Sh(X)$, the topos of sheaves on a topological space $X$, will be considered. In addition, instead of using internal commutative C*-algebras, the more general internal commutative rings are used. In Section~\ref{sec: pullback} we will drop this last condition and consider a fully C*-algebraic version.

Viewing the net $O\mapsto A(O)$ as a functor $\mathcal{V}(X)^{op}\to\mathbf{RingSp}$, and noting that $\mathbf{RingSp}$ is complete as a category, we can ask if this functor is a sheaf. Of course, we need to specify something like a covering relation before we can ask this question. Let $O_{1}$ and $O_{2}$ be two spacelike separated regions of spacetime. Then we say that $O:=O_{1}\vee O_{2}$ is covered by $O_{1}$ and $O_{2}$.

The investigation of this sheaf condition was first performed by Nuiten~\cite{nuiten}. This work is impressive, especially if you take into account that the research was done for Nuiten's bachelor thesis, under the supervision of Dr. Urs Schreiber. As it turns out, for physically reasonable nets the functor  $\mathcal{V}(X)^{op}\to\mathbf{RingSp}$ is not a sheaf. But, it does come close to a sheaf. Technically, what is meant by this is that under a mild condition the descent morphism is a local geometric surjection. This condition is called strong locality. Strong locality implies microcausality, and is implied by C*-independence in the product sense (see Definition~\ref{def: list}, taking $A=A(O_{1})$ and $B=A(O_{2})$, letting $O_{1}$ and $O_{2}$ be spacelike separated).

For the C*-algebraic version $\mathcal{V}(X)^{op}\to\mathbf{ucCSp}$, defined in Section~\ref{sec: pullback}, the sheaf condition is shown to be equivalent to C*-independence of the net together with the condition
\begin{equation*}
\forall C\in\mathcal{C}_{A(O_{1}\vee O_{2})}\ \ (C\cap A(O_{1}))\vee(C\cap A(O_{2}))=C,
\end{equation*}
for all pairs $(O_{1},O_{2})$ of spacelike separated regions.

\subsection{Motivation}

Why do we consider these constructions to be of interest? For a moment, suppose that we are sceptical about the specific topos models for quantum physics constructed so far~\cite{butish1,di,hls}. Even so, the discussion in~\cite{butish1} may suggest that using the language of (pre)sheaves over posets of contexts is a natural step in studying the foundations of quantum theory. Indeed, the work~\cite{ab,amsb} studies non-locality and contextuality using (pre)sheaves, without invoking any topos theory. Furthermore, studying the relation between the poset $\mathcal{C}_{A}$ and the algebra $A$ may be of interest in itself~\cite{hamhalter2,hado}. 

The research described below fits nicely within such programs. As an example, consider the following simple lemma by Nuiten~\cite{nuiten}. Let $(A,B)$ be a pair of C*-algebras associated to spacelike separated regions by some net. This net is said to satisfy microcausality iff such algebras commute, i.e. $[A,B]=\{0\}$. This condition of microcausality, then, is equivalent to the claim that the poset morphism 
\begin{equation*}
\mathcal{C}_{A\vee B}\to\mathcal{C}_{A}\times\mathcal{C}_{B},\ \ C\mapsto(C\cap A, C\cap B)
\end{equation*}
has a left adjoint. Hence, at the level of contexts, microcausality can be captured categorically!

Next, assume that we are curious about the ideas of Isham and D\"oring to the effect that physical theories should be formulated in possibly nontrivial (i.e. non-$\mathbf{Set}$) topoi. Apart from the copresheaf model $[\mathcal{C}_{A},\mathbf{Set}]$, insofar as these count, the only nontrivial example is the motivating $[\mathcal{C}_{A}^{op},\mathbf{Set}]$. The discussion given below may be of help in finding new topos models, as a central theme in the work is to encode independence conditions on nets of algebras as sheaf conditions. At this point, the reader who knows the basics of topos theory may object that the sheaves discussed below are not sheaves on a site~\cite[Chapter III]{mm}, but a generalisation thereof. Nevertheless, as suggested by Subsection~\ref{subsub: spectral}, there are relations between Nuiten's sheaves and sheaves on sites.

Finally, we can take the stance that we are interested in topos theory, but not so much in topos models to quantum theory. In this case, Section~\ref{sec: pullback} may be of interest when seen as a discussion of C*-algebras in topoi.\\

The text below is divided into four parts. In Section~\ref{sec: back} we briefly discuss those ideas from topos theory that are important for the other sections. In particular, we discuss C*-algebras in topoi. Section~\ref{sec: sheaf} discusses the sheaf condition of~\cite{nuiten}. One difference from the original treatment is that we do not assume the net of operator algebras to be additive. The sheaf condition leads to a new independence condition, called \textit{strong locality}. In Section~\ref{sec: locality} strong locality is compared to other locality conditions, such as microcausality and C*-independence. In relation to this, we describe these locality conditions at the level of commutative subalgebras $\mathcal{C}$. In Section~\ref{sec: pullback} we revisit the sheaf condition in a slightly different setting. In this setting rings are replaced by C*-algebras, which leads to certain technical complications. The sheaf condition is subsequently related to C*-independence of the net.

\section{Background} \label{sec: back}

In what follows we use rings (in particular C*-algebras) internal to a topos, geometric morphisms, and the action of geometric morphisms on internal C*-algebras. As remarked in the introduction, an important observation in the copresheaf topos model is that the object $\uA$ is a unital commutative C*-algebra in the topos $[\mathcal{C},\mathbf{Set}]$. The word `internal' refers to the internal language, or Mitchell-B\'enabou language, which can be associated to any topos~\cite[Section VI.5]{mm}. With respect to this language, an object $\uA$, together with arrows representing the algebraic structure such as addition $+:\uA\times\uA\to\uA$, and the involution $\ast:\uA\to\uA$, defines an internal C*-algebra if it satisfies all relevant axioms, such as
\begin{equation*}
\forall a,b,c\in\uA\ \left((a+b)+c=a+(b+c)\right),
\end{equation*}
expressing associativity of addition. In other words, $\uA$ is a model of the theory of C*-algebras in the topos $[\mathcal{C},\mathbf{Set}]$. This theory consists of a sort $A$, function symbols such as $+:A\times A\to A$ and $0:1\to A$, a predicate $N\subseteq A\times\mathbb{Q}^{+}$ for the norm, together with the axioms for a C*-algebra. In Appendix~\ref{sec: internal} we treat the axioms for an internal C*-algebra in some detail, with emphasis on the norm. For our purposes we do not need to know the details of this language. These details, and in particular, what it means for an axiom to be valid, can be found in various texts on topos theory. The text~\cite{bell} and Chapter VI of~\cite{mm} provide good introductions. The books~\cite{lasc,bor}, present many worked out examples of internal mathematics. The comprehensive~\cite{jh1} also explains the internal workings of topoi.

If $\mathcal{E}$ and $\mathcal{F}$ are topoi, then a geometric morphism $F:\mathcal{E}\to\mathcal{F}$ is an adjunction $F^{\ast}\dashv F_{\ast}$, where the left adjoint $F^{\ast}:\mathcal{F}\to\mathcal{E}$ is a left-exact functor, called the inverse image functor, and the right adjoint $F_{\ast}:\mathcal{E}\to\mathcal{F}$ is called the direct image functor.

\begin{exa}
To a topological $X$ we can associate the topos $Sh(X)$ of sheaves on that space~\cite[Chapter II]{mm}. If $X$ and $Y$ are Hausdorff spaces, geometric morphisms $F:Sh(X)\to Sh(Y)$ correspond bijectively to continuous maps $f:X\to Y$. If some geometric morphism $F$ corresponds to a map $f$, the corresponding direct image functor is given by
\begin{equation*}
\forall \uA\in Sh(X),\ \forall V\in\mathcal{O}Y,\ \  F_{\ast}(\uA)(V)=\uA(f^{-1}(V)).
\end{equation*}
If we describe sheaves as \'etale bundles, then the inverse image functor corresponds to the pullback of the \'etale bundles on $Y$ along the map $f:X\to Y$.
\end{exa}

\begin{exa}
Let $\mathcal{C}$ and $\mathcal{D}$ be small categories. A functor $\phi:\mathcal{C}\to\mathcal{D}$ induces a geometric morphism $\Phi:[\mathcal{C},\mathbf{Set}]\to[\mathcal{D},\mathbf{Set}]$ (see e.g.~\cite[VII.2]{mm}). The inverse image functor $\Phi^{\ast}$ is given by
\begin{equation*}
\forall \uA\in[\mathcal{D},\mathbf{Set}],\ \ \forall C\in\mathcal{C}\ \ \Phi^{\ast}(\uA)(C)=\uA(\phi(C)).
\end{equation*}
This geometric morphism, $\Phi$, has the property that $\Phi^{\ast}$ has a left adjoint $\Phi_{!}$. A geometric morphism with this property is called an \emph{essential} geometric morphism.
\end{exa}

For any geometric morphism $F:\mathcal{E}\to\mathcal{F}$, the inverse image functor $F^{\ast}$ preserves all colimits (it is a right adjoint) and all finite limits (it is left exact). This has important consequences. The functor $F^{\ast}$ preserves certain objects, such as the terminal object $F^{\ast}(\underline{1}_{\mathcal{F}})\cong\underline{1}_{\mathcal{E}}$, the natural numbers object $F^{\ast}(\underline{\mathbb{N}}_{\mathcal{F}})\cong\underline{\mathbb{N}}_{\mathcal{E}}$, and the objects $\qup$, and $\qui$.

If $\uA$ is a C*-algebra in $\mathcal{F}$, with addition $+:\uA\times\uA\to\uA$, then by left-exactness $F^{\ast}(+)$ corresponds to an arrow $F^{\ast}\uA\times F^{\ast}\uA\to F^{\ast}\uA$. In this way, $F^{\ast}\uA$ obtains arrows 
\begin{equation*}
\cdot:\qui\times F^{\ast}\uA\to F^{\ast}\uA,\ \  0:\underline{1}\to F^{\ast}\uA,\ \ \text{etc}\ldots
\end{equation*}
Many of the axioms for a C*-algebra hold for $F^{\ast}\uA$, as $F^{\ast}$ preserves all axioms which are expressed using geometric logic. This means that the axiom is of the form
\begin{equation*}
\forall x_{1}\in\underline{A}_{1},\ldots,\forall x_{n}\in\underline{X}_{n}\ \ \phi(x_{1},\ldots,x_{n})\to\psi(x_{1},\ldots,x_{n}),
\end{equation*}
where $\phi(x_{1},\ldots,x_{n})$ and $\psi(x_{1},\ldots,x_{n})$ are geometric formulae. A formula is geometric if it is constructed from the variables $x_{1},\ldots,x_{n}$, terms and predicates, using equalities $=$, finite conjunctions $\wedge$, arbitrary disjunctions $\bigvee$, and existential quantifiers $\exists$. A theory is called geometric if all axioms are of the form given above. Categorically, this means that the formulae are constructed using finite limits and colimits, which are constructions preserved by $F^{\ast}$. As an example, consider the axiom
\begin{equation*}
\forall a\in\uA\ \exists p\in\qup\ \ (a,p)\in\uN.
\end{equation*}
This axiom is equivalent to
\begin{equation*}
\forall a\in\uA\ \ \left(\left(\exists p\in\qui\ (a,p)\in\uN\right)\to\top\right),
\end{equation*}
which is of the desired form (note that $\top$ is allowed in geometric logic, as it is the conjunction over the empty set). The inverse image functor $F^{\ast}$ respects this axiom in the sense that the axiom
\begin{equation*}
\forall a\in F^{\ast}\uA\ \exists p\in\qup\ \ (a,p)\in F^{\ast}\uN
\end{equation*}
holds for $F^{\ast}\uA$. For a proper discussion of geometric logic and its relation to geometric morphisms, see~\cite{vic}, \cite[Chapter X]{mm}, \cite[D1]{jh1}.

\section{Nuiten's sheaves} \label{sec: sheaf}

This section reviews the sheaf condition as introduced by Joost Nuiten in his impressive bachelor thesis~\cite{nuiten}, supervised by Dr. Urs Schreiber. This is a stepping stone to the C*-algebraic version treated in Section~\ref{sec: pullback}, which uses the category $\mathbf{ucCSp}$ instead of $\mathbf{RingSp}$, whilst also motivating the notion of strong locality, the central concept of Section~\ref{sec: locality}. Note that in~\cite{nuiten}, additivity of the net of operator algebras is assumed at certain places; our treatment below shows that we do not need to assume this.

In the covariant topos model, to a unital C*-algebra $A$ in $\mathbf{Set}$ we associate a topos $\mathcal{T}_{A}=[\mathcal{C}_{A},\mathbf{Set}]$, as well as a unital commutative C*-algebra $\underline{A}$ in this topos.  Any $\ast$-homomorphism $f:A\to B$ induces an order-preserving function
\begin{equation*}
\hat{f}:\mathcal{C}_{A}\to\mathcal{C}_{B},\ \ \hat{f}(C)=f[C].
\end{equation*}
In turn, this function induces an essential geometric morphism $F:\mathcal{T}_{A}\to\mathcal{T}_{B}$. In this way, we obtain a functor from the category of C*-algebras and $\ast$-homomorphisms in $\mathbf{Set}$ to the category of Grothendieck topoi and geometric morphisms. However, if we restrict to $\ast$-homomorphisms that reflect commutativity in the sense that
\begin{equation*}
\forall a,b\in A,\ \ [f(a),f(b)]=0\Rightarrow [a,b]=0,
\end{equation*}
such as embeddings of C*-algebras, we can do the following. For a commutativity reflecting $f$, the inverse image map defines an order-preserving function
\begin{equation*}
f^{-1}:\mathcal{C}_{B}\to\mathcal{C}_{A},\ \ C\mapsto f^{-1}(C),
\end{equation*}
which induces an essential geometric morphism $G:\mathcal{T}_{B}\to\mathcal{T}_{A}$, and defines a contravariant functor from the category of C*-algebras and $\ast$-homomorphisms that reflect commutativity to the category of Grothendieck topoi and geometric morphisms.

\begin{lem}
Let $\psi:A\to B$ is a unit-preserving $\ast$-homomorphism. 
\begin{enumerate}
\item If $ker(\psi)=\{0\}$, then $\psi$ reflects commutativity. 
\item If $A$ is simple, then $\psi$ reflects commutativity.
\item If $A=\mathcal{B}(\mathcal{H})$, for a separable Hilbert space $\mathcal{H}$, then $\psi$ reflects commutativity iff $ker(\psi)=\{0\}$.
\end{enumerate}
\end{lem}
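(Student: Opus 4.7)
For part (1), the plan is immediate: any $\ast$-homomorphism satisfies $\psi([a,b]) = [\psi(a),\psi(b)]$, so $[\psi(a),\psi(b)] = 0$ forces $[a,b]\in\ker(\psi) = \{0\}$, whence $[a,b] = 0$.

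For part (2), I would reduce to part (1). The kernel $\ker(\psi)$ is a closed two-sided ideal of $A$, so simplicity forces $\ker(\psi)\in\{\{0\},A\}$. Unit-preservation excludes $\ker(\psi) = A$ (the only degenerate case being $B = \{0\}$, where commutativity is reflected trivially), so $\ker(\psi) = \{0\}$ and part (1) applies.

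For part (3), the ``if'' direction is again part (1). For the converse, I would invoke the classical structure theorem that, when $\mathcal{H}$ is separable, the only closed two-sided ideals of $\mathcal{B}(\mathcal{H})$ are $\{0\}$, the ideal $\mathcal{K}(\mathcal{H})$ of compact operators, and $\mathcal{B}(\mathcal{H})$ itself. Unit-preservation excludes the third possibility, so the task reduces to showing that if $\ker(\psi) = \mathcal{K}(\mathcal{H})$ then $\psi$ does not reflect commutativity. For this I would exhibit a pair $a,b\in\mathcal{B}(\mathcal{H})$ with $0\neq[a,b]\in\mathcal{K}(\mathcal{H})$: fix orthonormal vectors $e_{1},e_{2}\in\mathcal{H}$, take $p$ to be the rank-one projection onto $\mathbb{C}e_{1}$ and $v$ the rank-one partial isometry sending $e_{1}$ to $e_{2}$ and vanishing on $e_{1}^{\perp}$. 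Then $[p,v]$ is a nonzero finite-rank operator, hence compact and in $\ker(\psi)$, so $\psi(p)$ and $\psi(v)$ commute while $p$ and $v$ do not.

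I do not foresee a serious obstacle: the only nonelementary input is the ideal-structure theorem for $\mathcal{B}(\mathcal{H})$ on separable $\mathcal{H}$, which is standard, and the counterexample for (3) is just a convenient pair of rank-one operators whose commutator is visibly nonzero and finite-rank. Note also that in the finite-dimensional case part (3) is subsumed by part (2), since then $\mathcal{B}(\mathcal{H})\cong M_{n}(\mathbb{C})$ is simple and the only candidate for the kernel is already $\{0\}$.
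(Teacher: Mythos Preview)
Your proofs of parts (1) and (2) are essentially identical to the paper's. For part (3), however, you take a different route from the paper. The paper argues directly from the triviality of the center of $\mathcal{B}(\mathcal{H})$: if $\psi$ reflects commutativity and $a\in\ker(\psi)$, then $\psi(a)=0$ commutes with every $\psi(b)$, so by hypothesis $a$ commutes with every $b\in\mathcal{B}(\mathcal{H})$, forcing $a\in Z(\mathcal{B}(\mathcal{H}))=\mathbb{C}1$; unitality of $\psi$ then gives $a=0$. Your argument instead invokes the classification of closed two-sided ideals of $\mathcal{B}(\mathcal{H})$ for separable $\mathcal{H}$ and exhibits an explicit witnessing pair when $\ker(\psi)=\mathcal{K}(\mathcal{H})$. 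Both are correct, but the paper's argument is lighter---it uses only that the center is trivial, which is more elementary than the ideal-structure theorem---and in fact makes no use of separability, so it yields the result for arbitrary $\mathcal{H}$ and indeed for any unital C*-algebra with trivial center. Your approach trades this generality for concreteness: it hands the reader an explicit pair of non-commuting operators whose commutator lies in the kernel.
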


\begin{proof}
The first claim follows from
\begin{equation*}
0=[\psi(a),\psi(b)]=\psi([a,b])\ \ \Rightarrow\ \ [a,b]\in ker(\psi)=\{0\}.
\end{equation*}
For a simple C*-algebra $A$, the kernel of a $\ast$-homomorphism $\psi:A\to B$ is either $A$ or $\{0\}$. As $\psi$ preserves the unit, it follows that $ker(\psi)=\{0\}$, and $\psi$ reflects commutativity.

Next, let $A=\mathcal{B}(\mathcal{H})$. Assume that $\psi$ reflects commutativity, and let $a\in ker(\psi)$. As $\psi(a)=0$, $\psi(a)$ commutes with $\psi(b)$, for each $b\in\mathcal{B}(\mathcal{H})$. By assumption, $a$ commutes with all elements of $\mathcal{B}(\mathcal{H})$. As the centre of $\mathcal{B}(\mathcal{H})$ is $\mathbb{C}1$, we deduce $a=\lambda1$ for some $\lambda\in\mathbb{C}$. By assumption $0=\psi(\lambda1)=\lambda\psi(1)=\lambda$. Consequently $a=0$, and $ker(\psi)=\{0\}$.
\end{proof}

\begin{cor}
For a a pure state $\rho$ of $\mathcal{B}(\mathcal{H})$, the corresponding GNS representation $\pi_{\rho}:\mathcal{B}(\mathcal{H})\to\mathcal{B}(\mathcal{H}_{\rho})$ reflects commutativity iff $\rho$ is normal. 
\end{cor}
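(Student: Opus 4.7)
The plan is to invoke part 3 of the preceding lemma, which reduces the corollary to showing that $\ker(\pi_{\rho})=\{0\}$ iff $\rho$ is normal. Since $\mathcal{H}$ is separable, the only closed two-sided ideals of $\mathcal{B}(\mathcal{H})$ are $\{0\}$, $\mathcal{K}(\mathcal{H})$, and $\mathcal{B}(\mathcal{H})$. As $\pi_{\rho}$ is unital, $\ker(\pi_{\rho})\neq\mathcal{B}(\mathcal{H})$, so the only possible nonzero kernel is $\mathcal{K}(\mathcal{H})$. (If $\dim\mathcal{H}<\infty$, every state is normal and $\pi_{\rho}$ is automatically faithful, so the claim is trivial, and we may assume $\mathcal{H}$ infinite-dimensional.)

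For the ``if'' direction, a normal pure state on $\mathcal{B}(\mathcal{H})$ is a vector state $\rho(a)=\langle\xi,a\xi\rangle$ for some unit vector $\xi\in\mathcal{H}$. Since the identity representation of $\mathcal{B}(\mathcal{H})$ on $\mathcal{H}$ is irreducible and $\xi$ is a cyclic unit vector realising $\rho$, uniqueness of the GNS construction for pure states identifies $\pi_{\rho}$ with the identity representation up to unitary equivalence. In particular $\ker(\pi_{\rho})=\{0\}$.

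For the ``only if'' direction, suppose $\rho$ is pure but not normal. By the Takesaki decomposition of states on $\mathcal{B}(\mathcal{H})$ into a normal and a singular part, together with the fact that a pure state is an extreme point of the state space, $\rho$ must be purely singular, i.e.\ $\rho$ vanishes on $\mathcal{K}(\mathcal{H})$. For any compact $K$ and any $b\in\mathcal{B}(\mathcal{H})$, the operator $b^{\ast}K^{\ast}Kb$ is again compact, so
\begin{equation*}
\|\pi_{\rho}(K)\pi_{\rho}(b)\Omega_{\rho}\|^{2}=\rho(b^{\ast}K^{\ast}Kb)=0.
\end{equation*}
Cyclicity of $\Omega_{\rho}$ then gives $\pi_{\rho}(K)=0$, so $\mathcal{K}(\mathcal{H})\subseteq\ker(\pi_{\rho})$, and in particular $\ker(\pi_{\rho})\neq\{0\}$.

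The main obstacle is the step showing that a non-normal pure state of $\mathcal{B}(\mathcal{H})$ annihilates the compacts; everything else is a direct application of the previous lemma and basic GNS theory. This step is handled by citing the uniqueness of the normal/singular decomposition together with purity, after which the computation with $b^{\ast}K^{\ast}Kb$ being compact delivers the containment $\mathcal{K}(\mathcal{H})\subseteq\ker(\pi_{\rho})$ immediately.
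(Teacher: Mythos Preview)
Your proof is correct and follows the same overall strategy as the paper: normal pure states give faithful GNS representations, while non-normal pure states have the compacts in the kernel. The paper's proof is terser---for the non-normal direction it simply cites Kadison--Ringrose, Theorem~10.4.6, to obtain $\pi_{\rho}(\mathcal{K})=\{0\}$, and then observes directly that this contradicts reflection of commutativity. You instead unpack that citation: the normal/singular decomposition together with extremality shows a non-normal pure state annihilates $\mathcal{K}(\mathcal{H})$, and your GNS computation with $b^{\ast}K^{\ast}Kb$ then gives $\mathcal{K}(\mathcal{H})\subseteq\ker(\pi_{\rho})$. This is a more self-contained route to the same conclusion. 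One minor remark: your opening paragraph's enumeration of the closed ideals of $\mathcal{B}(\mathcal{H})$ is not actually needed, since you establish $\mathcal{K}(\mathcal{H})\subseteq\ker(\pi_{\rho})$ directly; the argument goes through without assuming separability at that point, and the appeal to part~3 of the lemma (which does assume separability) could equally be replaced by the direct observation the paper makes.
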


\begin{proof}
If $\rho$ is normal, then $\pi_{\rho}$ is faithful, and therefore reflects commutativity. If $\rho$ is not normal, then by~\cite[Theorem 10.4.6]{kari2}, $\pi_{\rho}(\mathcal{K})=\{0\}$, where $\mathcal{K}$ denotes the ideal of compact operators. Clearly, $\pi_{\rho}$ does not reflect commutativity in this case.
\end{proof}

As another example of a $\ast$-homomorphism that does not reflect commutativity, consider a continuous field of C*-algebras $(A,\{A_{x},\psi_{x}\}_{x\in X})$ (see e.g.~\cite[Chapter 10]{dixmier}). Here $A$ is a C*-algebra, $X$ a locally compact Hausdorff space, and for each $x\in X$ we are given a surjective $\ast$-homomorphism $\psi_{x}:A\to A_{x}$. If for $a\in A$ we define $a(x):=\psi_{x}(a)$, then $a$ can be identified with the family $\{a(x)\}_{x\in X}$. Note that
\begin{equation*}
[a,b]_{A}=0\ \ \Leftrightarrow \ \ \forall x\in X\ [a(x),b(x)]_{A_{x}}=0.
\end{equation*}
If there exists an $y\in X$ such that $A_{y}$ is commutative, but at least one $A_{x}$ is non-commutative, then the $\ast$-homomorphism $\psi_{y}:A\to A_{y}$ does not reflect commutativity.

After these remarks on reflection of commutativity, we return to the discussion of associating geometric morphisms to $\ast$-homomorphisms. Viewing this process as a contravariant functor has the advantage that it does not only define a geometric morphism, but a morphism of ringed topoi as well.

\begin{dork}
The category $\mathbf{RingTopos}$ of ringed topoi is given by:
\begin{itemize}
\item Objects are pairs $(\mathcal{E},\underline{R})$, with $\mathcal{E}$ a topos, and $\underline{R}$ a commutative ring with unit, internal to $\mathcal{E}$.
\item An arrow $(F,\underline{f}):(\mathcal{E},\underline{R})\to(\mathcal{F},\underline{S})$ is given by a geometric morphism $F:\mathcal{E}\to\mathcal{F}$, and a ring homomorphism $\underline{f}:F^{\ast}\underline{S}\to\underline{R}$ in $\mathcal{E}$.
\item Composition is defined by $(G,\underline{g})\circ(F,\underline{f})=(G\circ F,\underline{f}\circ F^{\ast}\underline{g})$.
\end{itemize}
\end{dork}

For the morphism $(G,\underline{g})$ induced by a commutativity-reflecting $\ast$-homomorphism $f:A\to B$, the inverse image functor $G^{\ast}:\mathcal{T}_{A}\to\mathcal{T}_{B}$ is easily described. If $\underline{F}\in\mathcal{T}_{A}$ and $D\in\mathcal{C}_{B}$, then $G^{\ast}(\underline{F})(D)=\underline{F}(f^{-1}(D))$. For a geometric morphism induced by a functor on the base categories, the direct image is usually harder to describe. However, in the case at hand we can easily describe $G_{\ast}$. Let $F:\mathcal{T}_{A}\to\mathcal{T}_{B}$ be the essential geometric morphism induced by $\hat{f}:\mathcal{C}_{A}\to\mathcal{C}_{B}$. Then $G_{\ast}=F^{\ast}$. So if $\underline{F}\in\mathcal{T}_{B}$ and $C\in\mathcal{C}_{A}$, then $G_{\ast}(\underline{F})(C)=\underline{F}(f[C])$. The ring part of the morphism of ringed topoi, $\underline{g}:G^{\ast}\underline{A}\to\underline{B}$, is the natural transformation given by $\underline{g}_{D}:f^{-1}(D)\to D$ being the restriction of $f$ to $f^{-1}(D)$, 
\begin{equation*}
\underline{g}_{D}=f|_{f^{-1}(D)}.
\end{equation*}
From the discussion above, it is clear that we obtained a contravariant functor from the category unital C*-algebras and unit-preserving $\ast$-homomorphisms that reflect commutativity to the category $\mathbf{RingTopos}$. Instead of using the category $\mathbf{RingTopos}$, we now restrict to the subcategory $\mathbf{RingSp}$. This makes it easier to calculate limits later on.

\begin{dork}
The category $\mathbf{RingSp}$ of ringed spaces is the following subcategory of $\mathbf{RingTopos}$:
\begin{itemize}
\item Objects are pairs $(X,\underline{R})$, with $X$ a topological space, and $\underline{R}$ a commutative ring with unit internal to $Sh(X)$.
\item An arrow $(f,\underline{f}):(X,\underline{R})\to(Y,\underline{S})$ is given by a continuous map $f:X\to Y$, and a ring homomorphism $\underline{f}:F^{\ast}\underline{S}\to\underline{R}$ in $Sh(X)$, where $F:Sh(X)\to Sh(Y)$ is the geometric morphism induced by $f$.
\end{itemize}
With slight abuse of notation, we will also write $(Sh(X),\underline{R})$ for the object $(X,\underline{R})$, as well as $(F,\underline{f})$ for an arrow $(f,\underline{f})$, emphasising that $\mathbf{RingSp}$ is indeed a subcategory of $\mathbf{RingTopos}$.
\end{dork}

If $P$ is a poset, then $P$ can be seen as a topological space $P_{\uparrow}$ by equipping it with the Alexandroff (upper set) topology, defined as
\begin{equation*}
U\in\mathcal{O}P\ \ \leftrightarrow\ \ \forall p\in P\ \ (p\in U)\wedge(p\leq q)\rightarrow(q\in U).
\end{equation*}
If we identify the elements $p\in P$ with the Alexandroff opens $(\uparrow p)\in\mathcal{O}P_{\uparrow}$, the topos $Sh(P_{\uparrow})$ is isomorphic to the topos $[P,\mathbf{Set}]$. This implies that for any C*-algebra $A$, the pair $(\mathcal{T}_{A},\underline{A})$ lies in $\mathbf{RingSp}$. Any order-preserving map $P\to Q$ of posets is an Alexandroff continuous map. A straightforward check then reveals that the geometric morphism $G: Sh(\mathcal{C}_{B})\to Sh(\mathcal{C}_{A})$ induced by the continuous map $f^{-1}:\mathcal{C}_{B}\to\mathcal{C}_{A}$, is, under the identification $Sh(\mathcal{C})\cong[\mathcal{C},\mathbf{Set}]$, the same geometric morphism as the one induced by $f^{-1}$ seen as a functor on poset categories. The morphisms of ringed topoi induced by $\ast$-homomorphisms are present in $\mathbf{RingSp}$ as well.

Now that we have defined the category of interest, we move to AQFT and derive Nuiten's sheaf condition. Consider the following situation. Suppose we are given a net $\mathcal{O}\mapsto A(\mathcal{O})$ of operator algebras. Throughout the paper we assume that for each region of spacetime $\mathcal{O}$ under consideration, the C*-algebra $A(\mathcal{O})$ is unital. For the moment, the only other assumption on the net is isotony, i.e., if $O_{1}\leq O_{2}$, then $A(O_{1})\subseteq A(O_{2})$.

Let $\mathcal{V}(X)$ denote a poset of certain causally complete opens of a spacetime manifold $X$, partially ordered by inclusion. As before, the details of $\mathcal{V}(X)$ are unimportant, but we will assume that the poset has binary joins and meets. Let $A:\mathcal{V}(X)\to\mathbf{CStar_{rc}}$ be a net of C*-algebras. The subscript \textit{rc} means that we restrict ourselves to morphisms that reflect commutativity. By isotony, the maps $A(O_{1})\to A(O_{2})$, corresponding to inclusions $O_{1}\subseteq O_{2}$, are inclusion maps, clearly satisfying the constraint of reflecting commutativity. We therefore obtain a contravariant functor\footnote{In~\cite{doering} D\"oring presents a slightly different way of dealing with $\ast$-homomorphisms. The arrows introduced by D\"oring are closely related to those of Nuiten~\cite{wollie2}. Using the version by D\"oring yields a covariant functor $A:\mathcal{V}(X)\to\mathbf{RingSp}$, instead of a contravariant functor.} 
\begin{equation*}
\underline{A}: \mathcal{V}(X)^{\text{op}}\to\mathbf{RingSp},\ \ \underline{A}(O)=(\mathcal{T}_{A(O)},\underline{A(O)}),
\end{equation*}
where the inclusion $O_{1}\subseteq O_{2}$ is mapped to $\underline{A}(O_{1}\subseteq O_{2})=(I,\underline{i})$, with 
\begin{equation*}
I^{\ast}:\mathcal{T}_{A(O_{1})}\to\mathcal{T}_{A(O_{2})}\ \ I^{\ast}(\underline{F})(C)=\underline{F}(C\cap A(O_{1})),
\end{equation*}
\begin{equation*}
\underline{i}:I^{\ast}(\underline{A(O_{1})})\to\underline{A(O_{2})},\ \ \underline{i}_{C}:C\cap A(O_{1})\hookrightarrow C,
\end{equation*}
and where the ring morphisms are inclusion maps. 

Let $O_{1},O_{2}\in\mathcal{V}(X)$. It will be convenient to introduce the following notation 
\begin{equation*}
A_{i}:=A(O_{i}),\ \ \ A_{1\wedge2}:=A(O_{1}\wedge O_{2}),\ \ \ A_{1\vee2}:=A(O_{1}\vee O_{2}),
\end{equation*}
\begin{equation*}
(\mathcal{T}_{i},\underline{A}_{i}):=(\mathcal{T}_{A_{i}},\underline{A_{i}}),
\end{equation*}
\begin{equation*}
(\mathcal{T}_{1\wedge2},\underline{A}_{1\wedge2}):=(\mathcal{T}_{A_{1\wedge2}},\underline{A_{1\wedge2}}),
\end{equation*}
\begin{equation*}
(\mathcal{T}_{1\vee2},\underline{A}_{1\vee2}):=(\mathcal{T}_{A_{1\vee2}},\underline{A_{1\vee2}}).
\end{equation*}
Consider the following diagram in $\mathbf{RingSp}$, where the morphisms $(I_{i},\underline{i}_{i})$ are induced by the inclusions $O_{1}\wedge O_{2}\subseteq O_{i}$ and the morphisms $(J_{i},\underline{j}_{i})$ are induced by the inclusions $O_{i}\subseteq O_{1}\vee O_{2}$.
\[ \xymatrix{
(\mathcal{T}_{1\vee2},\underline{A}_{1\vee2}) \ar@/_/[dddr]_{(J_{2},\underline{j}_{2})} \ar@/^/[drrr]^{(J_{1},\underline{j}_{1})} \ar@{.>}[dr]|-{(H,\underline{h})} \\
&(\mathcal{F},\underline{R}) \ar[dd]^{(P_{2},\underline{p}_{2})} \ar[rr]_{(P_{1},\underline{p}_{1})} & & (\mathcal{T}_{1},\underline{A}_{1})\ar[dd]_{(I_{1},\underline{i}_{1})} \\
\\
&(\mathcal{T}_{2},\underline{A}_{2}) \ar[rr]^{(I_{2},\underline{i}_{2})} & &(\mathcal{T}_{1\wedge2},\underline{A}_{1\wedge2}).} \]
The bottom square of the diagram is a pullback. As the category $\mathbf{RingSp}$ is complete, this pullback exists, and we will compute it below. We will think of the pullback object $(\mathcal{F},\underline{R})$ as the ringed topos of matching families for the cover $\{O_{1}, O_{2}\}$ of $O_{1}\vee O_{2}$. We are now ready to formulate Nuiten's sheaf condition.

\begin{dork} \label{def: sheafcond}
The functor $A:\mathcal{V}(X)^{op}\to\mathbf{RingSp}$ is said to be a sheaf iff  for each pair $O_{1},O_{2}\in\mathcal{V}(X)$ of spacelike separated opens, the descent morphism
\begin{equation*}
(H,\underline{h}):(\mathcal{T}_{1\vee2},\underline{A}_{1\vee2})\to(\mathcal{F},\underline{R}).
\end{equation*}
is an isomorphism of ringed spaces. 
\end{dork}

Let us briefly compare this sheaf condition with the sheaf condition used for topoi $Sh(X)$, where $X$ is a topological space. Let $F:\mathcal{O}X^{op}\to\mathbf{Set}$ be a presheaf, and $U\in\mathcal{O}X$ an open subset covered by smaller open subsets $\{U_{i}\}_{i\in I}$, in the sense of $U=\bigcup_{i\in I}U_{i}$. Consider the equalizer
\[\xymatrix{E\ \ar@{^{(}->}[r] & \prod_{i\in I}F(U_{i}) \ar@/^/[rr]^{p} \ar@/_/[rr]_{q} & & \prod_{i\neq j}F(U_{i}\wedge U_{j})},\]
where
\begin{equation*}
p((f_{k})_{k\in I})_{ij}:=f_{i}|_{U_{i}\wedge U_{j}},\ \ \ q((f_{k})_{k\in I})_{ij}:=f_{j}|_{U_{i}\wedge U_{j}}.
\end{equation*}
The presheaf $F$ is a sheaf iff for each such $U$ and $\{U_{i}\}_{i\in I}$, the descent morphism
\begin{equation*}
F(U)\to E,\ \ \ f\mapsto(f|_{U_{i}})_{i\in I},
\end{equation*}
is an isomorphism. Note that we can replace $\mathbf{Set}$ by any complete category, such as $\mathbf{RingSp}$, leading to the sheaf condition of the previous definition.\\

The next step is to make the descent morphism $(H,\underline{h})$ explicit in order to understand the sheaf condition at the level of the net $A:\mathcal{V}(X)\to\mathbf{CStar}$, and to investigate if this mathematically sensible condition is plausible on physical grounds as well. We start by finding the space $X$ of the topos $\mathcal{F}=Sh(X)$. The geometric morphisms $I_{i}$ and $J_{i}$ are induced by order-preserving functions
\begin{equation*}
y_{i}:\mathcal{C}_{1\vee2}\to\mathcal{C}_{i},\ \ \ y_{i}(C)=C\cap A_{i},
\end{equation*}
\begin{equation*}
x_{i}:\mathcal{C}_{i}\to\mathcal{C}_{1\wedge2},\ \ \ x_{i}(C)=C\cap A_{1\wedge2},
\end{equation*}
where we used the notation $\mathcal{C}_{i}:=\mathcal{C}(A(\mathcal{O}_{i}))$, etc$\ldots$ Define the poset
\begin{equation*}
\mathcal{C}_{1}\times_{\mathcal{C}_{1\wedge2}}\mathcal{C}_{2}=\{(C_{1},C_{2})\in\mathcal{C}_{1}\times\mathcal{C}_{2}\mid C_{1}\cap A_{1\wedge2}=C_{2}\cap A_{1\wedge2}\},
\end{equation*}
with partial order $(D_{1},D_{2})\leq (C_{1},C_{2})$ iff $D_{1}\subseteq C_{1}$ and $D_{2}\subseteq C_{2}$, and (order-preserving) projection maps $\pi_{i}:\mathcal{C}_{1}\times_{\mathcal{C}_{1\wedge2}}\mathcal{C}_{2}\to\mathcal{C}_{i}$. In the category $\mathbf{Poset}$ we obtain the pullback square
\[\xymatrix{ \mathcal{C}_{1}\times_{\mathcal{C}_{1\wedge2}}\mathcal{C}_{2}  \ar[d]^{\pi_{2}} \ar[r]_{\pi_{1}} & \mathcal{C}_{1}\ar[d]_{x_{1}} \\
\mathcal{C}_{2} \ar[r]^{x_{2}} &\mathcal{C}_{1\wedge2}\ \ .} \]
Taking the Alexandroff upper topology of a poset defines a functor $Al:\mathbf{Poset}\to\mathbf{Top}$, where $\mathbf{Top}$ is the category of topological spaces and continuous maps. This functor preserves limits~\cite{nuiten}. Note that if we replaced $\mathbf{Top}$ by the category of locales or topoi, then the functor $Al$ would not preserve all limits. With respect to the Alexandroff upper topologies, the previous square becomes a pullback in $\mathbf{Top}$. It will turn out that $\mathcal{C}_{1}\times_{\mathcal{C}_{1\wedge2}}\mathcal{C}_{2}$, equipped with the Alexandroff upper topology, is the space we are looking for. Once this has been shown, we will conclude that $\mathcal{F}=[\mathcal{C}_{1}\times_{\mathcal{C}_{1\wedge 2}}\mathcal{C}_{2},\mathbf{Set}]$. Let
\begin{equation*}
P_{i}:[\mathcal{C}_{1}\times_{\mathcal{C}_{1\wedge 2}}\mathcal{C}_{2},\mathbf{Set}]\to[\mathcal{C}_{i},\mathbf{Set}]
\end{equation*}
denote the geometric morphisms corresponding to the projections $\pi_{i}$. The next step in describing the descent morphism is to compute the following pushout of rings in $\mathcal{F}$:
\[ \xymatrix{
P_{1}^{\ast}I^{\ast}_{1}\underline{A}_{1\wedge2} \ar[r]^{P_{1}^{\ast}\underline{i}_{1}} \ar[d]_{P_{2}^{\ast}\underline{i}_{2}} & P^{\ast}_{1}\underline{A}_{1} \ar[d]^{\underline{p}_{1}} \\
P_{2}^{\ast}\underline{A}_{2} \ar[r]_{\underline{p}_{2}} & \underline{R}\ \ ,} \]
where we used $P^{\ast}_{1}I^{\ast}_{1}=P_{2}^{\ast}I^{\ast}_{2}$. 

In a functor category $[\mathcal{C},\mathbf{Set}]$, an object $\underline{R}$ is an internal ring iff it is a functor $\underline{R}:\mathcal{C}\to\mathbf{Ring}$. This entails that we can compute the pushout $\underline{R}$ stage-wise. Taking $(C_{1},C_{2})\in\mathcal{C}\times_{\mathcal{C}_{1\wedge2}}\mathcal{C}_{2}$, we compute the pushout of rings in $\mathbf{Set}$. Using
\begin{equation*}
P^{\ast}_{1}I^{\ast}_{1}\underline{A}_{1\wedge2}(C_{1},C_{2})=C_{1}\cap A_{1\wedge2}=C_{2}\cap A_{1\wedge2}=P^{\ast}_{2}I^{\ast}_{2}\underline{A}_{1\wedge2}(C_{1},C_{2}),
\end{equation*}
and $P^{\ast}_{1}\underline{A}_{1}(C_{1},C_{2})=C_{1}$ and $P^{\ast}_{2}\underline{A}_{2}(C_{1},C_{2})=C_{2}$, we obtain the pushout square
\[ \xymatrix{
C_{1}\cap A_{1\wedge2} \ar[r] \ar[d] & C_{1} \ar[d]^{-\otimes1} \\
C_{2} \ar[r]_{1\otimes-} & C_{1}\otimes_{C_{1}\cap A_{1\wedge2}}C_{2}\ \ ,} \]
where the unlabelled arrows are inclusion maps, we used that $C_{1}\cap A_{1\wedge2}=C_{2}\cap A_{1\wedge2}$, and we used that for commutative rings, the pushout ring is given by the tensor product of $C_{1}$ and $C_{2}$, viewed as $C_{1}\cap C_{2}\cap A_{1\wedge2}$-algebras.

\begin{lem}
Define $(\mathcal{F},\underline{R})\in\mathbf{RingSp}$ as $\mathcal{F}=[\mathcal{C}_{1}\times_{\mathcal{C}_{1\wedge2}}\mathcal{C}_{2},\mathbf{Set}]$ and put
\begin{equation*}
\underline{R}:\mathcal{C}_{1}\times_{\mathcal{C}_{1\wedge2}}\mathcal{C}_{2}\to\mathbf{Set},\ \ \underline{R}(C_{1},C_{2})=C_{1}\otimes_{C_{12}}C_{2},
\end{equation*}
where we used the notation $C_{12}=C_{1}\cap C_{2}\cap A_{1\wedge2}$. If $(D_{1},D_{2})\leq(C_{1},C_{2})$ in $\mathcal{C}_{1}\times_{\mathcal{C}_{1\wedge2}}\mathcal{C}_{2}$, the corresponding ring homomorphism is simply
\begin{equation*}
\underline{R}(\leq):D_{1}\otimes_{D_{12}}D_{2}\to C_{1}\otimes_{C_{12}}C_{2}\ \ a\otimes b\mapsto a\otimes b.
\end{equation*}
Define $\underline{p}_{i}:P_{i}^{\ast}\underline{A}_{i}\to\underline{R}$ as
\begin{equation*}
(\underline{p}_{1})_{(C_{1},C_{2})}:C_{1}\to C_{1}\otimes_{C_{12}}C_{2}\ \ a\mapsto a\otimes1,
\end{equation*}
\begin{equation*}
(\underline{p}_{2})_{(C_{1},C_{2})}:C_{2}\to C_{1}\otimes_{C_{12}}C_{2}\ \ b\mapsto 1\otimes b.
\end{equation*}
Then the following diagram is a pullback in $\mathbf{RingSp}$:
\[ \xymatrix{
(\mathcal{F},\underline{R}) \ar[d]_{(P_{2},\underline{p}_{2})} \ar[r]^{(P_{1},\underline{p}_{1})} & (\mathcal{T}_{1},\underline{A}_{1})\ar[d]^{(I_{1},\underline{i}_{1})} \\
(\mathcal{T}_{2},\underline{A}_{2}) \ar[r]_{(I_{2},\underline{i}_{2})} &(\mathcal{T}_{1\wedge2},\underline{A}_{1\wedge2})\ \ .} \]
\end{lem}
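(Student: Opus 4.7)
The strategy is to exploit the general mechanism for computing pullbacks in $\mathbf{RingSp}$: the pullback is obtained by first taking the pullback of the underlying (Alexandroff) spaces and then computing the pushout of the relevant rings inside the resulting sheaf topos. Since $\mathbf{RingSp}$ has the structure of a category of ringed spaces in the usual sense, this is analogous to the computation of fibre products of affine schemes, where scheme-theoretic pullbacks correspond to ring-theoretic pushouts.

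For the space part, I would start from the $\mathbf{Poset}$-pullback square displayed just before the lemma, in which $\mathcal{C}_{1}\times_{\mathcal{C}_{1\wedge 2}}\mathcal{C}_{2}$ appears as the pullback of $x_{1}$ and $x_{2}$. Applying the Alexandroff functor $Al:\mathbf{Poset}\to\mathbf{Top}$, which is stated in the text to preserve limits, yields a pullback square in $\mathbf{Top}$. Under the identification $Sh(\mathcal{C}_{\uparrow})\cong[\mathcal{C},\mathbf{Set}]$ and the observation that order-preserving maps of posets induce essential geometric morphisms in a functorial way compatible with this identification, the corresponding square of topoi is also a pullback. Thus $\mathcal{F}=[\mathcal{C}_{1}\times_{\mathcal{C}_{1\wedge 2}}\mathcal{C}_{2},\mathbf{Set}]$ with projections $P_{i}$ induced by $\pi_{i}$.

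For the ring part, I would compute the pushout of $P_{1}^{\ast}\underline{A}_{1}\leftarrow P_{1}^{\ast}I_{1}^{\ast}\underline{A}_{1\wedge 2}=P_{2}^{\ast}I_{2}^{\ast}\underline{A}_{1\wedge 2}\rightarrow P_{2}^{\ast}\underline{A}_{2}$ stage-wise. A ring in the functor topos $[\mathcal{D},\mathbf{Set}]$ is just a functor $\mathcal{D}\to\mathbf{Ring}$, and colimits in such functor categories are computed pointwise; so the pushout at a stage $(C_{1},C_{2})$ is the pushout in $\mathbf{Ring}$ of $C_{1}\leftarrow C_{12}\rightarrow C_{2}$, which for commutative rings is the tensor product $C_{1}\otimes_{C_{12}}C_{2}$. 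The structure maps $\underline{p}_{i}$ as described in the statement are precisely the coprojections into this tensor product, and the functoriality of $\underline{R}$ on $(D_{1},D_{2})\leq(C_{1},C_{2})$ is dictated by the universal property of the pushout applied to the inclusions $D_{i}\hookrightarrow C_{i}$ lying over $D_{12}\hookrightarrow C_{12}$.

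Putting the two ingredients together gives the pullback square of the lemma: given any cone $(Q_{i},\underline{q}_{i}):(\mathcal{E},\underline{S})\to(\mathcal{T}_{i},\underline{A}_{i})$ satisfying $(I_{1},\underline{i}_{1})\circ(Q_{1},\underline{q}_{1})=(I_{2},\underline{i}_{2})\circ(Q_{2},\underline{q}_{2})$, the topos-level universal property of $\mathcal{F}$ produces a unique geometric morphism $M:\mathcal{E}\to\mathcal{F}$ with $P_{i}\circ M=Q_{i}$, and then the pushout property of $\underline{R}$ (transported through $M^{\ast}$, which preserves finite colimits of rings) yields a unique ring homomorphism $\underline{m}:M^{\ast}\underline{R}\to\underline{S}$ compatible with $\underline{q}_{1}$ and $\underline{q}_{2}$. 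The main obstacle is the first step: checking carefully that the pullback of sheaf topoi on Alexandroff spaces really is the sheaf topos on the poset pullback (and in particular remains of the form $Sh(X)$ so that the pullback is again in $\mathbf{RingSp}$, not merely in $\mathbf{RingTopos}$). This is where the preservation of limits by $Al$ is essential, since in the larger $2$-category of locales or topoi one cannot expect the Alexandroff construction to behave so well.
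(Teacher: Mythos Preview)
Your approach is essentially the paper's: compute the pullback in $\mathbf{RingSp}$ by first taking the pullback of underlying spaces in $\mathbf{Top}$ (via $Al$ preserving limits), then the stage-wise pushout of rings in the resulting functor topos, and finally verify the universal property using that inverse image functors preserve ring pushouts. The paper carries this out in exactly the same order.

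One point deserves correction. You assert that ``the corresponding square of topoi is also a pullback'' and later invoke ``the topos-level universal property of $\mathcal{F}$''. This is neither needed nor true in general: the paper explicitly remarks that $Al$ does \emph{not} preserve limits when the target is replaced by locales or topoi. Since objects of $\mathbf{RingSp}$ are spaces (not arbitrary topoi), the universal property must be checked at the level of $\mathbf{Top}$. A test cone is a pair $(Sh(X),\underline{S})$ together with continuous maps $f_{i}:X\to(\mathcal{C}_{i})_{\uparrow}$ agreeing over $\mathcal{C}_{1\wedge 2}$; the $\mathbf{Top}$-pullback yields the unique continuous $h:X\to(\mathcal{C}_{1}\times_{\mathcal{C}_{1\wedge 2}}\mathcal{C}_{2})_{\uparrow}$, and only then does one pass to the induced geometric morphism $H$. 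Your ``main obstacle'' is thus a non-issue once you stay in $\mathbf{Top}$, and the remainder of your argument (stage-wise tensor product as ring pushout, preservation of that pushout under $H^{\ast}$) matches the paper's proof verbatim.
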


\begin{proof}
Suppose that we have the following commutative diagram in $\mathbf{RingSp}$:
\[ \xymatrix{
(Sh(X),\underline{S}) \ar@/_/[dddr]_{(F_{2},\underline{f}_{2})} \ar@/^/[drrr]^{(F_{1},\underline{f}_{1})} \ar@{.>}[dr]|-{(H,\underline{h})} \\
&(\mathcal{F},\underline{R}) \ar[dd]^{(P_{2},\underline{p}_{2})} \ar[rr]_{(P_{1},\underline{p}_{1})} & & (\mathcal{T}_{1},\underline{A}_{1})\ar[dd]_{(I_{1},\underline{i}_{1})} \\
\\
&(\mathcal{T}_{2},\underline{A}_{2}) \ar[rr]^{(I_{2},\underline{i}_{2})} & &(\mathcal{T}_{1\wedge2},\underline{A}_{1\wedge2})\ \ .} \]
We need to show that there exists a unique $(H,\underline{h})$ completing the diagram. By definition of $\mathcal{F}$, there exists a unique continuous map $h:X\to\mathcal{C}_{1}\times_{\mathcal{C}_{1\wedge2}}\mathcal{C}_{2}$ such that
\[ \xymatrix{
X \ar@/_/[ddr]_{f_{2}} \ar@/^/[drr]^{f_{1}} \ar@{.>}[dr]|-{h} \\
& \mathcal{C}_{1}\times_{\mathcal{C}_{1\wedge2}}\mathcal{C}_{2} \ar[d]^{\pi_{2}} \ar[r]_{\pi_{1}} & \mathcal{C}_{1}\ar[d]_{i_{1}} \\
&\mathcal{C}_{2} \ar[r]^{i_{2}} &\mathcal{C}_{1\wedge2}} \]
is a commutative diagram. Let $H$ be the geometric morphism corresponding to $h$. For the next step we consider the action of the inverse image functor $H^{\ast}$ on pushout diagrams of rings in $\mathcal{F}$.

If $F:\mathcal{E}\to\mathcal{F}$ is any geometric morphism, then $F^{\ast}$ will map a pushout square of rings in $\mathcal{F}$ to a pushout square of rings in $\mathcal{E}$. This can be verified in a straightforward way using naturality of the adjunction $F^{\ast}\vdash F_{\ast}$, and the fact that $F_{\ast}$ is left-exact. As a consequence, for a ring $\underline{R}$ in $\mathcal{E}$, the object $F_{\ast}\underline{R}$ is a ring in $\mathcal{F}$. An arrow $F^{\ast}\underline{S}\to\underline{R}$ is a ring homomorphism in $\mathcal{E}$ iff the corresponding arrow $\underline{S}\to F_{\ast}\underline{R}$ is a ring homomorphism in $\mathcal{F}$\ \footnote{Note that if we replace the algebraic theory of rings by a more general geometric theory, this need not be the case. One reason is that for a model $\underline{R}$ of a geometric theory, the object $F_{\ast}\underline{R}$ need not be a model of the same theory, as $F_{\ast}$ need not preserve any colimits.}. 

By the previous considerations, we know that the square below is a pushout of rings in $Sh(X)$.
\[ \xymatrix{
H^{\ast}P_{1}^{\ast}I^{\ast}_{1}\underline{A}_{1\wedge2} \ar[r]^{H^{\ast}P_{1}^{\ast}\underline{i}_{1}} \ar[d]_{H^{\ast}P_{2}^{\ast}\underline{i}_{2}} & H^{\ast}P^{\ast}_{1}\underline{A}_{1} \ar[d]^{H^{\ast}\underline{p}_{1}} \ar@/^/[ddr]^{\underline{f}_{1}} \\
H^{\ast}P_{2}^{\ast}\underline{A}_{2} \ar@/_/[drr]_{\underline{f}_{2}} \ar[r]_{H^{\ast}\underline{p}_{2}} & H^{\ast}\underline{R} \ar[dr]^{\underline{h}}\\
& &  \underline{S}\ \ .} \]
The pair $(H,\underline{h})$ exists and is unique.
\end{proof}

Using this lemma, we can write down an explicit expression for the descent morphism.

\begin{lem}
The descent morphism is given by
\begin{equation*}
(H,\underline{h}):([\mathcal{C}_{1\vee2},\mathbf{Set}],\underline{A}_{1\vee2})\to([\mathcal{C}_{1}\times_{\mathcal{C}_{1\wedge2}}\mathcal{C}_{2},\mathbf{Set}],\underline{R}),
\end{equation*}
where $H$ is the geometric morphism induced by the poset map
\begin{equation*}
h:\mathcal{C}_{1\vee2}\to\mathcal{C}_{1}\times_{\mathcal{C}_{1\wedge2}}\mathcal{C}_{2},\ \ C\mapsto(C\cap A_{1},C\cap A_{2}),
\end{equation*}
and the ring morphism $\underline{h}:H^{\ast}\underline{R}\to\underline{A}_{1\vee 2}$ in $[\mathcal{C}_{1\vee2},\mathbf{Set}]$ is given by the functions
\begin{equation} \label{equ: xi}
\underline{h}_{C}:(C\cap A_{1})\otimes_{C\cap A_{1\wedge2}}(C\cap A_{2})\to C,\ \ a\otimes b\mapsto a\cdot b.
\end{equation}
\end{lem}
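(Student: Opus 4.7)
The plan is to invoke the universal property established in the preceding lemma, applied to the specific cone whose apex is $(\mathcal{T}_{1\vee 2},\underline{A}_{1\vee 2})$ and whose legs are the morphisms $(J_i,\underline{j}_i)$ induced by the inclusions $O_i\subseteq O_1\vee O_2$. Since the preceding lemma exhibits $(\mathcal{F},\underline{R})$ as the pullback, the descent morphism $(H,\underline{h})$ exists and is unique, so the task reduces to computing what this unique morphism looks like.

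First I would treat the geometric-morphism part $H$. The maps $j_i:\mathcal{C}_{1\vee 2}\to\mathcal{C}_i$ underlying $(J_i,\underline{j}_i)$ are precisely $C\mapsto C\cap A_i$, so by the universal property of the poset pullback $\mathcal{C}_1\times_{\mathcal{C}_{1\wedge 2}}\mathcal{C}_2$ in $\mathbf{Poset}$ (which, after applying the limit-preserving Alexandroff functor, transfers to $\mathbf{Top}$), the unique order-preserving map $h$ satisfying $\pi_i\circ h=j_i$ must send $C\mapsto (C\cap A_1,C\cap A_2)$. Along the way one checks that this pair does lie in the pullback poset, i.e.\ that
\begin{equation*}
(C\cap A_1)\cap A_{1\wedge 2}=C\cap A_{1\wedge 2}=(C\cap A_2)\cap A_{1\wedge 2},
\end{equation*}
which is immediate from $A_{1\wedge 2}\subseteq A_1\cap A_2$.

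Next I would identify the ring part $\underline{h}$. Following the diagrammatic argument in the proof of the previous lemma, $\underline{h}:H^{\ast}\underline{R}\to\underline{A}_{1\vee 2}$ is determined stage-wise by the universal property of the pushout that defines $\underline{R}$, applied in $[\mathcal{C}_{1\vee 2},\mathbf{Set}]$. At a context $C\in\mathcal{C}_{1\vee 2}$ the cospan becomes
\begin{equation*}
C\cap A_1\;\xrightarrow{\;\underline{j}_1\;}\; C\;\xleftarrow{\;\underline{j}_2\;}\; C\cap A_2,
\end{equation*}
with both legs being the inclusions into $C$. The universal map out of $(C\cap A_1)\otimes_{C\cap A_{1\wedge 2}}(C\cap A_2)$ is therefore forced to be $a\otimes b\mapsto a\cdot b$, exactly as claimed.

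The one substantive thing left to verify is that $a\otimes b\mapsto a\cdot b$ is a well-defined ring homomorphism into the commutative ring $C$. Well-definedness over the tensor product requires that for $r\in C\cap A_{1\wedge 2}$ one has $(ar)\cdot b=a\cdot (rb)$, which is just associativity in $C$; multiplicativity uses commutativity of $C$ to rearrange $(a_1a_2)(b_1b_2)=(a_1b_1)(a_2b_2)$. I expect no real obstacle here: the computation is a short and direct consequence of $C$ being a commutative subalgebra of $A_{1\vee 2}$ containing both $C\cap A_1$ and $C\cap A_2$. The only mildly delicate point is the interplay between the poset-level pullback and the stage-wise pushout of rings, but this is precisely what the preceding lemma has already set up.
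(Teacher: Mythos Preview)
Your proposal is correct and follows essentially the same route as the paper: the lemma is an immediate consequence of the preceding pullback lemma, with $h$ forced by the poset-level universal property and $\underline{h}$ forced by the stage-wise ring pushout. The paper's only addition is the one-line remark deriving the formula $\underline{h}_{C}(a\otimes b)=\underline{h}_{C}(a\otimes1)\cdot\underline{h}_{C}(1\otimes b)=a\cdot b$ from the ring-homomorphism property, which makes your explicit well-definedness check unnecessary; otherwise your argument matches (and in fact spells out more detail than) the paper.
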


Note that (\ref{equ: xi}) follows from
\begin{equation*}
\underline{h}_{C}(a\otimes b)=\underline{h}_{C}(a\otimes1\cdot 1\otimes b)=\underline{h}_{C}(a\otimes1)\cdot\underline{h}_{C}(1\otimes b)=a\cdot b.
\end{equation*}

In order to be a sheaf, by definition the morphism $(H,\underline{h})$ must be an isomorphism in $\mathbf{RingSp}$. In particular, $h$ has to be a homeomorphism and therefore a bijection. This seems like a strong demand, as shown by the following example. 

\begin{exa}
Let $A_{1\vee2}=C([0,1]^{2})$ be the C*-algebra of continuous complex-valued functions on the unit square. Let
\begin{equation}
A_{1}=\{f\in C([0,1]^{2})\mid \exists g\in C([0,1]),\  \forall x,y,\ f(x,y)=g(x)\},
\end{equation}
\begin{equation}
A_{2}=\{f\in C([0,1]^{2})\mid \exists g\in C([0,1]),\  \forall x,y,\ f(x,y)=g(y)\},
\end{equation}
Note that $A_{1\vee2}=A_{1}\otimes A_{2}$ as C*-algebras in this example. In particular, $A_{1}\cap A_{2}=\mathbb{C}$. Consider $C\in\mathcal{C}_{1\vee2}$ given by
\begin{equation} \label{countersheaf}
C=\{f\in C([0,1]^{2})\mid\forall x\in[0,1],\ \ f(x,x)=f(0,0)\}.
\end{equation}
Clearly, $C\cap A_{1}=C\cap A_{2}=\mathbb{C}$. Consequently, $h(C)=h(\mathbb{C})$ hence $h$ is not injective, so that it does not define an isomorphism of posets or spaces. Note that in this example $\underline{h}_{C}:\mathbb{C}\to C$ is the inclusion map, which is a ring morphism that is not surjective.
\end{exa}

\begin{exa} \label{exa: counter}
We can simplify the previous example in order to demonstrate that the full sheaf condition can be expected to fail for physically reasonable nets. Let $2=\{0,1\}$ be the two element discrete space. Define $A\vee B\cong A\otimes B\cong C(2\times 2)$.
\begin{equation*}
A=\{f:2\times2\to\mathbb{C}\mid f(0,0)=f(0,1),\ f(1,0)=f(1,1)\}\cong C(2),
\end{equation*}
\begin{equation*}
B=\{f:2\times2\to\mathbb{C}\mid f(0,0)=f(1,0),\ f(0,1)=f(1,1)\}\cong C(2).
\end{equation*}
Let $e_{ij}=e_{i}\otimes e_{j}$ denote the characteristic function
\begin{equation*}
e_{ij}(k,l)=\delta_{ik}\delta_{jl},\ \ i,j,k,l\in\{0,1\}.
\end{equation*}
Consider the unital subalgebra $C$ of $A\otimes B$ generated by $e_{10}-e_{01}$. The C*-algebra $C$ consists of functions $f:2\times 2\to\mathbb{C}$ of the form
\begin{equation*}
f=\alpha_{0}1+\alpha_{1}(e_{10}-e_{01})+\alpha_{2}(e_{10}+e_{01}),\ \ \alpha_{0},\alpha_{1},\alpha_{2}\in\mathbb{C},
\end{equation*}
where $1$ denotes the constant function. Note that $C\cap A=C\cap B=\mathbb{C}$. Consequently, $h(C)=h(\mathbb{C})$ and the sheaf condition does not hold.
\end{exa}

If the full sheaf condition is too strong, we could consider weaker versions instead. Nuiten introduces strong locality as such an alternative. However, we first consider microcausality. Microcausality is the assumption that if $O_{1}$ and $O_{2}$ are spacelike separated, then $[A_{1},A_{2}]=\{0\}$. This condition may be reformulated quite elegantly as

\begin{poe}{(Nuiten's Lemma~\cite{nuiten})} \label{poe: micro}
Microcausality is equivalent to the property that the poset morphism $h:\mathcal{C}_{1\vee2}\to\mathcal{C}_{1}\times_{\mathcal{C}_{1\wedge2}}\mathcal{C}_{2}$ has a left adjoint $\vee$. 
\end{poe}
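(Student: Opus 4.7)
The key observation driving both directions is that the candidate left adjoint is forced by the adjunction requirement. If $(C_1,C_2)\leq h(C)$ for some $C\in\mathcal{C}_{1\vee2}$, then $C_1,C_2\subseteq C$, so every such $C$ must contain the unital C*-subalgebra $C^*(C_1\cup C_2)\subseteq A_{1\vee2}$. Thus the only candidate for $\vee(C_1,C_2)$ is $C^*(C_1\cup C_2)$, and this candidate lies in $\mathcal{C}_{1\vee2}$ (i.e., is commutative) precisely when $[C_1,C_2]=0$.

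For microcausality $\Rightarrow$ existence of $\vee$: from $[A_1,A_2]=0$ we obtain $[C_1,C_2]=0$ for every $C_1\in\mathcal{C}_1$ and $C_2\in\mathcal{C}_2$, so $\vee(C_1,C_2):=C^*(C_1\cup C_2)$ is well-defined and order-preserving as a map $\mathcal{C}_1\times_{\mathcal{C}_{1\wedge2}}\mathcal{C}_2\to\mathcal{C}_{1\vee2}$. The adjunction $\vee\dashv h$ is then the chain of equivalences
\begin{equation*}
\vee(C_1,C_2)\subseteq C\ \Leftrightarrow\ C_1,C_2\subseteq C\ \Leftrightarrow\ C_1\subseteq C\cap A_1\ \text{and}\ C_2\subseteq C\cap A_2\ \Leftrightarrow\ (C_1,C_2)\leq h(C),
\end{equation*}
where the middle step uses $C_i\subseteq A_i$.

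For the converse, the unit $(C_1,C_2)\leq h(\vee(C_1,C_2))$ gives $C_1,C_2\subseteq\vee(C_1,C_2)$ for every pair in the pullback, and commutativity of $\vee(C_1,C_2)$ then forces $[C_1,C_2]=\{0\}$. To pass from this pointwise statement at the context level up to $[A_1,A_2]=\{0\}$, one picks self-adjoint $a\in A_1$ and $b\in A_2$ and tests with the natural pair $(C^*(1,a),C^*(1,b))$.

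The main obstacle will be verifying that this test pair actually lies in the pullback, i.e., $C^*(1,a)\cap A_{1\wedge2}=C^*(1,b)\cap A_{1\wedge2}$. In the intended AQFT setting where $O_1,O_2$ are spacelike separated and $A_{1\wedge2}$ is correspondingly degenerate (often $\mathbb{C}\cdot 1$), both sides collapse and the check is automatic, giving $[a,b]=0$ at once. In greater generality one would enlarge the test pair by a common commutative subalgebra of $A_{1\wedge2}$ to force the matching-intersections condition, or decompose $a,b$ into components transverse to $A_{1\wedge2}$; the heart of the matter, however, is the clean universal-property argument in the preceding paragraphs.
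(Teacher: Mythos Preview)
Your argument is essentially the same as the paper's: both directions hinge on the candidate $\vee(C_{1},C_{2})=C^{\ast}(C_{1}\cup C_{2})$, the adjunction is verified via the same chain of equivalences, and the converse uses the unit $(C_{1},C_{2})\leq h(\vee(C_{1},C_{2}))$ together with the fact that normal (resp.\ self-adjoint) elements generate contexts and span the algebra.

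One point worth noting: you are in fact \emph{more} careful than the paper on the converse. The paper simply asserts that ``every normal operator appears in some context'' and concludes, without checking that the relevant pair of contexts lies in the fibre product $\mathcal{C}_{1}\times_{\mathcal{C}_{1\wedge2}}\mathcal{C}_{2}$. Your worry about $C^{\ast}(1,a)\cap A_{1\wedge2}=C^{\ast}(1,b)\cap A_{1\wedge2}$ is legitimate, and your remedy (that in the intended situation $O_{1},O_{2}$ spacelike separated forces $A_{1\wedge2}$ to be trivial, so the pullback is the full product) is the right one. The paper tacitly relies on the same fact but does not spell it out.
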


\begin{proof}
If we assume microcausality and $(C_{1},C_{2})\in\mathcal{C}_{1}\times_{\mathcal{C}_{1\wedge2}}\mathcal{C}_{2}$, then $C_{1}\cup C_{2}$ is commutative in $A_{1\vee 2}$ and generates a context $\vee(C_{1},C_{2})$ in $\mathcal{C}_{1\vee2}$, which we denote by $C_{1}\vee C_{2}$. By construction
\begin{equation} \label{equ: adj}
C_{1}\vee C_{2}\subseteq C\ \ \text{iff}\ \ (C_{1}\subseteq C\cap A_{1})\ \text{and}\ (C_{2}\subseteq C\cap A_{2}).
\end{equation}
Conversely, assume that $h$ has a left adjoint $\vee$. By setting $C=C_{1}\vee C_{2}$ in (\ref{equ: adj}) we find $C_{1}, C_{2}\subseteq(C_{1}\vee C_{2})$. As $C_{1}\vee C_{2}$ is commutative, $[C_{1},C_{2}]=\{0\}$, for each $(C_{1},C_{2})\in\mathcal{C}_{1}\times_{\mathcal{C}_{1\wedge2}}\mathcal{C}_{2}$. As every normal operator appears in some context, and every operator is a linear combination of normal operators, we conclude that microcausality holds.
\end{proof}

For a net $A$ satisfying the sheaf condition, $h$ needs to be an isomorphism of posets, implying that $\vee$ and $h$ form an adjunction equivalence, which means that the inequalities of the unit and counit of this adjunction are equalities. To be more precise, the sheaf condition implies the equalities
\begin{equation} \label{equ: unit}
C=(C\cap A_{1})\vee(C\cap A_{2});
\end{equation}
\begin{equation} \label{equ: counit}
(C_{1}\vee C_{2})\cap A_{1}=C_{1},\ \ (C_{1}\vee C_{2})\cap A_{2}=C_{2},
\end{equation}
for each $C\in\mathcal{C}_{1\vee2}$ and each $(C_{1},C_{2})\in\mathcal{C}_{1}\times_{\mathcal{C}_{1\wedge2}}\mathcal{C}_{2}$. We already noted that (\ref{equ: unit}) is too restrictive. However, the equality (\ref{equ: counit}), introduced in~\cite{nuiten} as strong locality, does not seem that restrictive at first glance.

\begin{dork}
A net $A:\mathcal{V}(X)\to\mathbf{Sets}$ of operator algebras is called \textbf{strongly local} if it satisfies microcausality and if for any pair $O_{1},O_{2}\in\mathcal{V}(X)$ of spacelike separated opens, equality (\ref{equ: counit}) holds.
\end{dork}

Strong locality states that $h$, seen as a functor of poset-categories, is a coreflector (i.e. it has a left adjoint which is a right inverse). We can describe strong locality as a condition on $H$, instead of $h$. 

\begin{dork}
A geometric morphism $F:\mathcal{E}\to\mathcal{F}$ is called a \textbf{local geometric morphism} is $F_{\ast}$ is full and faithful.
\end{dork}

There are various equivalent ways of stating that a geometric morphism is local (\cite[Theorem C3.6.1]{jh1}). The important point is that for any pair $\mathcal{C}$ and $\mathcal{D}$ of small categories\footnote{Where we assume that $\mathcal{C}$ is Cauchy-complete in the sense that each idempotent morphism splits. As we are concerned with poset categories, this condition holds trivially.}, local geometric morphisms $F:[\mathcal{C},\mathbf{Set}]\to[\mathcal{D},\mathbf{Set}]$ correspond exactly to coreflectors $f:\mathcal{C}\to\mathcal{D}$.

\begin{cor}
A net $A:\mathcal{V}(X)\to\mathbf{Sets}$ of operator algebras is strongly local iff for any pair $O_{1},O_{2}\in\mathcal{V}(X)$ of spacelike separated opens, the geometric morphism $H$ of the descent morphism is local.
\end{cor}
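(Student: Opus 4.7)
The plan is to reduce the statement directly to the correspondence recalled in the paragraph preceding the corollary: a geometric morphism $F:[\mathcal{C},\mathbf{Set}]\to[\mathcal{D},\mathbf{Set}]$ between presheaf topoi on small (Cauchy-complete) categories is local precisely when the inducing functor $\mathcal{C}\to\mathcal{D}$ is a coreflector. Since the descent morphism $H$ was shown in the preceding lemma to be induced by the poset map $h:\mathcal{C}_{1\vee2}\to\mathcal{C}_{1}\times_{\mathcal{C}_{1\wedge2}}\mathcal{C}_{2}$, and both of these posets are trivially Cauchy-complete when regarded as categories (each idempotent in a poset is an identity), this correspondence applies without further ado.

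First I would unpack what strong locality says combinatorially. By Proposition~\ref{poe: micro}, microcausality for the pair $(O_{1},O_{2})$ is equivalent to $h$ having a left adjoint $\vee:\mathcal{C}_{1}\times_{\mathcal{C}_{1\wedge2}}\mathcal{C}_{2}\to\mathcal{C}_{1\vee2}$. The counit of $\vee\dashv h$ at $(C_{1},C_{2})$ is the inequality $h(C_{1}\vee C_{2})=((C_{1}\vee C_{2})\cap A_{1},(C_{1}\vee C_{2})\cap A_{2})\leq(C_{1},C_{2})$, and equation~(\ref{equ: counit}) is exactly the assertion that this inequality is an equality for every $(C_{1},C_{2})$. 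Thus strong locality says precisely that $h$ admits a left adjoint whose composite $h\circ\vee$ is the identity, i.e., that $h$ is a coreflector in the sense used in the preceding definition.

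Finally I would invoke the stated correspondence: $H$ is local iff its inducing poset map $h$ is a coreflector iff (by the preceding paragraph) the net is strongly local at $(O_{1},O_{2})$. Doing this for every spacelike separated pair gives the claimed biconditional. The only substantive step is the combinatorial reformulation of~(\ref{equ: counit}) as the condition that the counit of $\vee\dashv h$ is an identity; no genuinely new ingredient beyond Proposition~\ref{poe: micro} and the cited characterization of local geometric morphisms between presheaf topoi is required, so I do not anticipate any real obstacle.
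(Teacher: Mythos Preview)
Your argument is correct and is exactly the reasoning the paper intends: the corollary is stated there without proof, as an immediate consequence of the preceding remark that local geometric morphisms between presheaf topoi on Cauchy-complete small categories correspond to coreflectors, together with the observation (made explicitly in the text just before the corollary) that strong locality says precisely that $h$ is a coreflector. One small slip worth noting: what you call the counit of $\vee\dashv h$ at $(C_{1},C_{2})$ is in fact the \emph{unit}, and the automatic inequality from the adjunction points the other way, $(C_{1},C_{2})\leq h(C_{1}\vee C_{2})$; since you are asserting equality this does not affect the conclusion, and indeed the paper's own label for equation~(\ref{equ: counit}) is the source of the confusion.
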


It is tempting to think of strong locality as stating that although $A$ may not be a sheaf, it is infinitesimally close to being one. To make this less sketchy, consider a geometric morphism $F: Sh(Y)\to Sh(X)$ coming from a continuous map $f:Y\to X$, of sober spaces, and assume that $f$ is an infinitesimal thickening. By this we mean that $f$ is a surjection with the property that for each fibre $f^{-1}(x)$ we can pick an element $y_{x}$ such that the only neighbourhood of $y_{x}$ in $f^{-1}(x)$ is  $f^{-1}(x)$ itself, and the assignment $c: x\mapsto y_{x}$ defines a continuous section of $f$ (\cite[C3.6]{jh1}). If this holds, $F$ is a local geometric morphism.

This is relevant to strong locality. If we assume strong locality, and view $h:\mathcal{C}_{1\vee2}\to\mathcal{C}_{1}\times_{\mathcal{C}_{1\wedge2}}\mathcal{C}_{2}$ as an Alexandroff continuous map, then it is an infinitesimal thickening in the sense given above. The continuous section $c$ is given by $\vee$. We find another way of looking at strong locality; the map $h$ is an infinitesimal thickening.

\section{Strong locality and independence conditions} \label{sec: locality}

\subsection{Independence conditions}

The previous section introduced strong locality as a weaker version of the sheaf condition. A net of observable algebras satisfying Einstein causality is strongly local, and any strongly local net must satisfy microcausality. In this section we try to pinpoint strong locality among the various independence conditions used in AQFT. In what follows, we concentrate on pairs $(A,B)$ of unital C*-algebras, instead of whole nets of such algebras. Fo example, think of $A$ and $B$ as operator algebras associated to two spacelike separated regions of spacetime.

\begin{dork}{(\cite{sum})} \label{def: list}
Let $A$ and $B$ be two (not necessarily commutative) unital C*-subalgebras of some larger C*-algebra $\mathfrak{A}$. Then the pair $(A,B)$ satisfies:
\begin{enumerate}
\item \textbf{microcausality} if the elements of $A$ commute with those of $B$, i.e. $[A,B]=\{0\}$;
\item \textbf{extended locality} if it satisfies microcausality and $A\cap B=\mathbb{C}$;
\item \textbf{C*-independence} if it satisfies microcausality and if for every $a\in A$ and $b\in B$, $ab=0$ implies $a=0$ or $b=0$. This condition is usually called the \textbf{Schlieder property};
\item \textbf{C*-independence in the product sense} if it satisfies microcausality and $A\vee B\cong A\otimes B$.
\end{enumerate}
\end{dork}

The locality conditions are sorted in increasing strength. From their definitions we see that C*-independence in the product sense implies C*-independence, and that extended locality implies microcausality. It is not obvious at first sight that C*-independence implies extended locality.

\begin{lem}
C*-independence implies extended locality.
\end{lem}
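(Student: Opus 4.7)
My plan is to show that under C*-independence we have $A \cap B = \mathbb{C}\cdot 1$; microcausality, the other clause of extended locality, is already part of the hypothesis, so nothing more is needed there. Since $A \cap B$ is a unital $\ast$-subalgebra of $\mathfrak{A}$, any $c \in A \cap B$ is a complex linear combination of its real and imaginary parts $\tfrac{1}{2}(c+c^{\ast})$ and $\tfrac{1}{2i}(c-c^{\ast})$, both self-adjoint and both again in $A \cap B$. Hence it suffices to prove that every self-adjoint $c \in A \cap B$ is a real multiple of the unit.

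For this I would argue by contradiction using the continuous functional calculus. Suppose the spectrum $\sigma(c) \subseteq \mathbb{R}$ contains two distinct points $\lambda_{1} \neq \lambda_{2}$. Choose continuous functions $f,g : \sigma(c) \to [0,1]$ supported in disjoint open neighbourhoods of $\lambda_{1}$ and $\lambda_{2}$ respectively, with $f(\lambda_{1}) = 1 = g(\lambda_{2})$ and $fg \equiv 0$ on $\sigma(c)$. The elements $f(c)$ and $g(c)$ are then nonzero, since their operator norms equal $\|f\|_{\infty} = \|g\|_{\infty} = 1$, whereas $f(c)\, g(c) = (fg)(c) = 0$. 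Because $c$ lies in the unital norm-closed $\ast$-subalgebra $A$, the functional calculus places $f(c), g(c) \in C^{\ast}(1,c) \subseteq A$, and by the same reasoning $f(c), g(c) \in B$. Applying the Schlieder property to $f(c) \in A$ and $g(c) \in B$ now forces one of them to vanish, contradicting the construction.

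Consequently $\sigma(c)$ consists of a single point $\{\lambda\}$, and self-adjointness of $c$ then gives $c = \lambda \cdot 1 \in \mathbb{C}\cdot 1$. I expect no real obstacle: the whole argument is a compact application of functional calculus together with the Schlieder property. The only point worth emphasizing is that the functional calculus keeps us inside $A \cap B$, which is immediate from both $A$ and $B$ being unital C*-subalgebras containing $c$, so that $C^{\ast}(1,c)$ is contained in each.
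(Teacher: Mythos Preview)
Your proof is correct and follows essentially the same strategy as the paper: both produce, from the assumption that $A\cap B$ is larger than $\mathbb{C}$, two nonzero elements of $A\cap B$ whose product is zero, contradicting the Schlieder property. The only cosmetic difference is that the paper applies Gelfand duality to the whole commutative algebra $A\cap B$ (using microcausality to get commutativity), whereas you work with the spectrum of a single self-adjoint element via the functional calculus; your version therefore never needs to invoke microcausality in the argument for $A\cap B=\mathbb{C}$, but the underlying idea is the same.
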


\begin{proof}
By microcausality, $A\cap B$ is a commutative unital C*-algebra. Hence $A\cap B\cong C(\Sigma)$, where $\Sigma$ is the associated Gelfand spectrum. Under the assumption of microcausality, extended locality is equivalent to the compact Hausdorff space $\Sigma$ being a singleton. We give a contrapositive proof of the lemma. Assume that $x,y\in\Sigma$ are two distinct points. By the Hausdorff property there exist open neighbourhoods $U_{x}$ of $x$ and $U_{y}$ of $y$, such that $U_{x}\cap U_{y}=\emptyset$. A compact Hausdorff space is completely regular, therefore there exist nonzero continuous real-valued functions $f$ and $g$ on $\Sigma$, such that the support of $f$ lies in $U_{x}$ and the support of $g$ lies in $U_{y}$. We found $f,g\in(A\cap B)_{sa}$ such that $f\neq0$, $g\neq0$ and $f\cdot g=0$. This implies that the Schlieder property fails for the pair $(A,B)$.
\end{proof}

The following examples show that none of the conditions of Definition~\ref{def: list} are equivalent.

\begin{exa}
Take $\mathfrak{A}=A_{1}\oplus B_{1}$, with $A_{1}$ and $B_{1}$ C*-algebras. Defining $A=A_{1}\oplus\mathbb{C}$, and $B=\mathbb{C}\oplus B_{1}$, the pair $(A,B)$ satisfies microcausality, but not extended locality, since $A\cap B=\mathbb{C}\oplus\mathbb{C}$.
\end{exa}

\begin{exa} \label{exa: important}
Consider $\mathfrak{A}=C([0,1],\mathbb{C})$, the continuous complex-valued functions on the closed interval $[0,1]$. Define
\begin{equation*}
A=\{f\in\mathfrak{A}\mid f|_{[0,1/2]}\ \text{is constant}\ \};
\end{equation*}
\begin{equation*}
B=\{f\in\mathfrak{A}\mid f|_{[1/2,1]}\ \text{is constant}\ \}.
\end{equation*}
Then the pair $(A,B)$ satisfies extended locality, but the Schlieder property fails.
\end{exa}

\begin{exa}
Let $A$ be a von Neumann factor, and $B=A'$ its commutant. Then the pair $(A,B)$ is C*-independent. But, as shown in~\cite[Corollary 4.6]{effroslance}, it is C*-independent in the product sense iff $A$ is semidiscrete. 
\end{exa}

For commutative C*-algebras, C*-independence and C*-independence in the product sense are equivalent, as shown by the following lemma. The original proof of the lemma is included for the sake of completeness.

\begin{lem}{(\cite{hamhalter} Theorem 11.1.1)}
Let $C$ and $D$ be commutative unital C*-subalgebras of some larger C*-algebra. If $(C,D)$ is C*-independent, then it is C*-independent in the product sense.
\end{lem}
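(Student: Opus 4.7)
The plan is to reduce the statement to a classical fact about Gelfand duality. Since $(C,D)$ satisfies microcausality, $C$ and $D$ commute as subalgebras of the ambient C*-algebra, so the C*-algebra $C\vee D$ they generate is again commutative and unital. Applying Gelfand duality, write $C\cong C(X)$, $D\cong C(Y)$, and $C\vee D\cong C(Z)$, where $X,Y,Z$ are compact Hausdorff spaces. Because commutative C*-algebras are nuclear, the C*-tensor product is unambiguous and satisfies $C\otimes D\cong C(X\times Y)$, so the statement $C\vee D\cong C\otimes D$ is equivalent to $Z\cong X\times Y$ via the appropriate map.

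Next I would consider the canonical $\ast$-homomorphism
\begin{equation*}
m:C\otimes D\to C\vee D,\qquad c\otimes d\mapsto cd.
\end{equation*}
This is well defined because $C$ and $D$ commute inside $C\vee D$. Its image is the $\ast$-subalgebra generated by $C\cup D$ (commutativity collapses alternating products), hence dense; being a $\ast$-homomorphism between C*-algebras, $m$ has closed range, so $m$ is surjective. Dually, $m$ corresponds to a continuous map $\phi:Z\to X\times Y$, and injectivity of $m$ is equivalent to surjectivity of $\phi$. So the whole theorem reduces to showing $\phi(Z)=X\times Y$.

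To establish surjectivity of $\phi$ I would argue by contraposition, converting a failure of surjectivity into a failure of the Schlieder property. Suppose $\phi(Z)\neq X\times Y$; since $Z$ is compact and $X\times Y$ Hausdorff, $\phi(Z)$ is closed, so there exists $(x_{0},y_{0})\in(X\times Y)\setminus\phi(Z)$, and by definition of the product topology we may choose open neighbourhoods $U\ni x_{0}$, $V\ni y_{0}$ with $(U\times V)\cap\phi(Z)=\emptyset$. By complete regularity of compact Hausdorff spaces (Urysohn), pick nonzero $f\in C(X)$ with $\mathrm{supp}(f)\subseteq U$ and nonzero $h\in C(Y)$ with $\mathrm{supp}(h)\subseteq V$. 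Then $f\otimes h$ vanishes on $\phi(Z)$, which by the Gelfand correspondence means $m(f\otimes h)=fh=0$ in $C\vee D$, while $f\neq 0$ in $C$ and $h\neq 0$ in $D$. This contradicts C*-independence, completing the proof.

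The main obstacle, and really the only nontrivial content of the argument, is the topological separation step in the last paragraph: one has to arrange the witnesses to the failure of Schlieder as a \emph{product} $f\otimes h$ by exploiting the product topology on $X\times Y$, which is exactly where the two separate algebras $C$ and $D$ come apart. Everything else is routine Gelfand duality plus the standard fact that $\ast$-homomorphisms between C*-algebras have closed range.
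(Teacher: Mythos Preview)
Your argument is correct and follows the same overall strategy as the paper: define the multiplication map $C\otimes D\to C\vee D$, pass to Gelfand spectra, and derive a Schlieder violation from a failure of injectivity via Urysohn-type separation. The packaging differs slightly: the paper does not introduce the dual map $\phi:Z\to X\times Y$ but instead takes a nonzero positive element $h\in\ker\Phi$, views it as a function on $X\times Y$, locates a point where $h>c>0$, and builds a simple tensor $f\otimes g$ with $0\le f\otimes g\le h$, concluding $\Phi(f\otimes g)=0$ by positivity of $\Phi$. Your dualisation replaces this order-theoretic domination step by the equivalent topological statement that $\phi(Z)$ is a proper closed subset of $X\times Y$, which is a bit cleaner since it sidesteps the positivity argument; conversely, the paper's version never needs to name the spectrum $Z$ of $C\vee D$ or check that injectivity of $m$ is exactly surjectivity of $\phi$.
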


\begin{proof}
Define the $\ast$-homomorphism
\begin{equation*}
\Phi: C\otimes D\to C\vee D,\ \ f\otimes g\mapsto f\cdot g.
\end{equation*}
We will show that $\Phi$ is a $\ast$-isomorphism. Assume that there is an element $h\in Ker\Phi$ that is nonzero and nonnegative. Let $X$ and $Y$ be the Gelfand spectra of $C$ and $D$ respectively. We will use the isomorphism
\begin{equation*}
C\otimes D\to C(X\times Y),\ \ f\otimes g\mapsto u,\ \ u:X\times Y\to\mathbb{C},\ \ u(x,y)=f(x)g(y).
\end{equation*}
Under this isomorphism $h$ can be seen as a nonzero, nonnegative function $h:X\times Y\to\mathbb{R}$. Let $(x,y)\in X\times Y$ be a point such that $h( x,y)>0$. Consider compact neighbourhoods $U_{x}$ of $x$ in $X$, and $U_{y}$ of $y$ in $Y$, such that $h$ restricted to $U_{x}\times U_{y}$ is strictly positive. By compactness of $U_{x}\times U_{y}$, there exists a constant $c>0$, such that $h>c$ on $U_{x}\times U_{y}$. There exists a nonzero nonnegative continuous function $f$ on $X$, that vanishes outside of $U_{x}$. Likewise, there exists a continuous nonzero nonnegative function $g$ that vanishes outside of $U_{y}$. Rescale these two functions such that $0\leq f,g\leq\sqrt{c}$. Define $\bar{h}=f\otimes g$. By construction, $\bar{h}\leq h$, therefore $\Phi(\bar{h})=0$. This means that $f\cdot g=0$, implying that $(C,D)$ is not C*-independent. Contrapositively, if $(C,D)$ is C*-independent, then $Ker\Phi=\{0\}$, implying $C\otimes D\cong C\vee D$. 
\end{proof}

We know from the Proposition~\ref{poe: micro} that microcausality of $(A,B)$ can be described at the level of contexts $\mathcal{C}_{A}$, and $\mathcal{C}_{B}$. Microcausality is equivalent to the claim that the poset-morphism
\begin{equation*}
r:\mathcal{C}_{A\vee B}\to\mathcal{C}_{A}\times_{\mathcal{C}_{A\cap B}}\mathcal{C}_{B}\ \ r(C)=(C\cap A,C\cap B)
\end{equation*}
has a left adjoint. Extended locality is now also easily described at the level of contexts, as it amounts to microcausality combined with the statement that $\mathcal{C}_{A\cap B}$ is a singleton set. Equivalently, the pair $(A,B)$ satisfies extended locality iff it satisfies microcausality and the pullback square
\[ \xymatrix{ \mathcal{C}_{A}\times_{\mathcal{C}_{A\cap B}}\mathcal{C}_{B} \ar[r] \ar[d] & \mathcal{C}_{A} \ar[d]^{(-)\cap B} \\
\mathcal{C}_{B} \ar[r]_{(-)\cap A} & \mathcal{C}_{A\cap B} } \]
is equal to the product $\mathcal{C}_{A}\times\mathcal{C}_{B}= \mathcal{C}_{A}\times_{\mathcal{C}_{A\cap B}}\mathcal{C}_{B}$. We proceed to describe C*-independence at the level of contexts.

\begin{poe}
The pair $(A,B)$ is C*-independent iff 
\begin{equation*}
\forall C\in\mathcal{C}_{A}\ \ \forall D\in\mathcal{C}_{B}\ \ C\vee D\cong C\otimes D.
\end{equation*}
\end{poe}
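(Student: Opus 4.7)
The strategy is to reduce the statement to the commutative Hamhalter lemma proved just above, which says that for a pair of \emph{commutative} unital C*-subalgebras C*-independence is equivalent to C*-independence in the product sense. Since any $C \in \mathcal{C}_A$ and $D \in \mathcal{C}_B$ form such a commutative pair inside $\mathfrak{A}$, the task reduces to transferring C*-independence between the full pair $(A,B)$ and every pair $(C,D)$ of commutative contexts.

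For the forward direction I would use a straightforward restriction argument. Assume $(A,B)$ is C*-independent. For any $C \in \mathcal{C}_A$ and $D \in \mathcal{C}_B$ microcausality of $(C,D)$ follows from $C \subseteq A$ and $D \subseteq B$, and Schlieder descends the same way: if $c \in C$ and $d \in D$ satisfy $cd = 0$, then $c \in A$ and $d \in B$, so $c=0$ or $d=0$. Thus $(C,D)$ is a commutative C*-independent pair, and the Hamhalter lemma gives $C \vee D \cong C \otimes D$.

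For the converse, assume $C \vee D \cong C \otimes D$ for every $C \in \mathcal{C}_A$ and $D \in \mathcal{C}_B$. I would first extract microcausality and then Schlieder. For microcausality, given normal $a \in A$ and normal $b \in B$, apply the hypothesis to the singly-generated contexts $C^{\ast}(1,a) \in \mathcal{C}_A$ and $C^{\ast}(1,b) \in \mathcal{C}_B$: the join is isomorphic to the commutative tensor product, forcing $[a,b] = 0$. Since every element of a C*-algebra is a complex linear combination of self-adjoints, $[A,B] = \{0\}$ follows. For Schlieder, suppose $a \in A$, $b \in B$ with $ab = 0$. By the microcausality just established, $a$ commutes with $b^{\ast}$, so
\[
(a^{\ast}a)(b^{\ast}b) = a^{\ast}(a b^{\ast})b = a^{\ast}(b^{\ast}a)b = a^{\ast}b^{\ast}(ab) = 0.
\]
The self-adjoint elements $a^{\ast}a$ and $b^{\ast}b$ generate commutative contexts $C := C^{\ast}(1, a^{\ast}a) \in \mathcal{C}_A$ and $D := C^{\ast}(1, b^{\ast}b) \in \mathcal{C}_B$, and by hypothesis $C \vee D \cong C \otimes D$. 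Product-sense C*-independence of the commutative pair $(C,D)$ implies the Schlieder property for $(C,D)$, so $a^{\ast}a = 0$ or $b^{\ast}b = 0$, equivalently $a = 0$ or $b = 0$.

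The main obstacle is the Schlieder step, because arbitrary $a \in A$ and $b \in B$ need not be normal and so do not individually lie in any single commutative context. The trick is to pass to $a^{\ast}a$ and $b^{\ast}b$, which are automatically self-adjoint, and to verify that $(a^{\ast}a)(b^{\ast}b) = 0$ follows from $ab = 0$ via microcausality. Once this reduction to singly-generated commutative contexts is in place, the product-sense hypothesis immediately delivers the dichotomy $a=0$ or $b=0$, closing the proof.
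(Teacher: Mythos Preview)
Your proof is correct and follows essentially the same route as the paper: for the forward direction restrict Schlieder to contexts and apply the commutative (Hamhalter) lemma, and for the converse reduce arbitrary $a,b$ to self-adjoints $a^{\ast}a$ and $b^{\ast}b$ lying in singly generated contexts. One small streamlining in the paper's version: it uses $bb^{\ast}$ rather than $b^{\ast}b$, so that $(a^{\ast}a)(bb^{\ast}) = a^{\ast}(ab)b^{\ast} = 0$ follows immediately from $ab=0$ without invoking the microcausality you established first; the paper also leaves the derivation of microcausality implicit (it is forced by $C\vee D$ being isomorphic to the commutative $C\otimes D$), whereas you spell it out explicitly.
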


\begin{proof}
Assume that $(A,B)$ is C*-independent. If $C\in\mathcal{C}_{A}$ and $D\in\mathcal{C}_{B}$, then $(C,D)$ satisfies the Schlieder property, because $(A,B)$ does. The pair $(C,D)$ is C*-independent, which implies C*-independence in the product sense, as we are working with commutative algebras. We conclude that $C\vee D\cong C\otimes D$.

Conversely, assume that $C\vee D\cong C\otimes D$. Then $(C,D)$ is C*-independent, and satisfies the Schlieder property. All normal operators of $A$ and $B$ occur in contexts, so the Schlieder property holds when we restrict $a$ and $b$ to normal operators. Let $a\in A$ and $b\in B$ be arbitrary. Assume that $a\cdot b=0$. Then 
\begin{equation*}
(a^{\ast}a)\cdot(bb^{\ast})= a^{\ast}\cdot(ab)\cdot b^{\ast}=0.
\end{equation*}
By the Schlieder property for normal operators, $a^{\ast}a=0$ or $bb^{\ast}=0$. This implies that $a=0$ or $b=0$, proving the Schlieder property for the pair $(A,B)$. We conclude that $(A,B)$ is C*-independent.
\end{proof}

\subsection{C*-independence and the spectral presheaf} \label{subsub: spectral}

As we will see below, using an equivalent description of C*-independence, this condition resembles a sheaf condition on the topos $[\mathcal{C}^{op},\mathbf{Set}]$. Strictly speaking, it is not really a sheaf condition, because the `covering relation' in question fails to be a basis for a Grothendieck topology. 

By a state on a C*-algebra we mean a normalised positive linear functional on the algebra. As argued in~\cite{sum}, a pair $(A,B)$ is C*-independent iff, for any state $\phi_{1}$ of $A$, and any state $\phi_{2}$ on $B$, there exists a unique state $\phi$ on $A\vee B$, such that
\begin{equation*}
\forall a\in A\ \ \forall b\in B\ \ \phi(a\cdot b)=\phi_{1}(a)\cdot\phi_{2}(b).
\end{equation*}
From the previous proposition it follows that $(A,B)$ is C*-independent iff, for any $C\in\mathcal{C}_{A}$, any state $\phi_{1}$ on $C$, any $D\in\mathcal{C}_{B}$, and any state $\phi_{2}$ on $D$, there exists a unique state $\phi$ on $C\vee D$ such that $\phi(ab)=\phi_{1}(a)\phi_{2}(b)$.\\

Let $\Sigma_{C}$ denote the Gelfand spectrum of the context $C$, and let $\text{PV}(\Sigma_{C})$ denote the set of probability valuations on $\Sigma_{C}$. See e.g.~\cite{hls} for a precise definition of a probability valuations $\mu:\mathcal{O}\Sigma_{C}\to[0,1]_{l}$. If $C_{1}\subseteq C_{2}$, let $\rho_{C_{2}C_{1}}:\Sigma_{C_{2}}\to\Sigma_{C_{1}}$ denote the Gelfand dual of the inclusion map $C_{1}\hookrightarrow C_{2}$. Define the function
\begin{equation*}
\text{PV}(i_{C_{1}C_{2}}):\text{PV}(\Sigma_{C_{2}})\to\text{PV}(\Sigma_{C_{1}})\ \ \mu\mapsto \mu\circ\rho^{-1}_{C_{2}C_{1}}.
\end{equation*}
For any unital C*-algebra $A$, this assignment defines a presheaf 
\begin{equation*}
\text{PV}(\underline{\Sigma}):\mathcal{C}_{A}^{op}\to\underline{Set}.
\end{equation*}
For an element $\lambda\in\Sigma_{C_{2}}$, let $\delta_{\lambda}\in\text{PV}(\Sigma_{C_{2}})$ denote the point valuation satisfying $\delta_{\lambda}(U)=1$ iff $\lambda\in U$ and $\delta_{\lambda}(U)=0$ otherwise. By definition,
\begin{equation*}
\text{PV}(i_{C_{2}C_{1}})(\delta_{\lambda})=\delta_{\lambda}\circ\rho^{-1}_{C_{2}C_{1}}=\delta_{\rho_{C_{2}C_{1}}(\lambda)}.
\end{equation*}
This allows us to see the spectral presheaf $\underline{\Sigma}$ as a subobject of the presheaf $\text{PV}(\underline{\Sigma})$. The spectral presheaf, which is a central object in the work of Isham and collaborators~\cite{butish1,di}, is the presheaf assigning to each context $C$, its Gelfand spectrum $\Sigma{C}$, and assigning to the inclusion $C_{1}\subseteq C_{2}$, the (continuous) restriction map $\rho_{C_{2}C_{1}}:\Sigma_{C_{2}}\to\Sigma_{C_{1}}$.

\begin{rem}
Given a locale $\underline{L}$ in the presheaf topos $[\mathcal{C}^{op},\mathbf{Set}]$, we can assign to it the locale $\underline{\text{PV}}(\underline{L})$ of internal probability valuations on it. This assignment is part of an endofunctor
\begin{equation*}
\underline{\text{PV}}:\underline{Loc}_{[\mathcal{C}^{op},\mathbf{Set}]}\to\underline{Loc}_{[\mathcal{C}^{op},\mathbf{Set}]},
\end{equation*}
which is, in turn, part of a monad~\cite{vic7}. The inclusion $\underline{\Sigma}\subseteq\text{PV}(\underline{\Sigma})$ resembles of the unit $\eta: I\to\underline{\text{PV}}$ of this monad. Clearly, the inclusion cannot be completely identified as $\eta_{\underline{\Sigma}}$, as we do not view $\underline{\Sigma}$ as an internal locale here, and neither do we consider \textbf{internal} probability valuations in the definition of $PV(\uS)$.
\end{rem}

\begin{rem}
For finite-dimensional C*-algebras $A$, the presheaf $\text{PV}(\underline{\Sigma})$ (or rather the restriction of it to a finite subset of contexts) is easily identified with the quantum-mechanical realisation of the presheaf of $R$-distributions on the event sheaf $D_{R}(\mathcal{E})$, used in~\cite{ab}. Here $R$ is taken to be the non-negative real numbers.
\end{rem}

Using the Riesz--Markov theorem, C*-independence of the pair $(A,B)$ can be translated to a condition on $\text{PV}(\underline{\Sigma}):\mathcal{C}_{A\vee B}^{op}\to\mathbf{Set}$, which resembles a sheaf condition. For any $C\in\mathcal{C}_{A}\subseteq\mathcal{C}_{A\vee B}$, any $D\in\mathcal{C}_{B}\subseteq\mathcal{C}_{A\vee B}$, for each pair of probability valuations $\mu_{1}\in\text{PV}(\Sigma_{C})$, and $\mu_{2}\in\text{PV}(\Sigma_{D})$, there exists a unique $\mu\in\text{PV}(\Sigma_{C\vee D})$ such that
\begin{equation*}
\mu|_{C}:=\text{PV}(i_{C,C\vee D})(\mu)=\mu_{1},\ \ \ \mu|_{D}:=\text{PV}(i_{D,C\vee D})(\mu)=\mu_{2},
\end{equation*}
and, in addition, $\mu$ can be identified as the product probability valuation $\mu=\mu_{1}\times\mu_{2}$ (see~\cite{vic7} for a discussion of product valuations).

If we think of the pair $(C,D)$ as covering $C\vee D$, this resembles a sheaf condition on $\text{PV}(\underline{\Sigma})$. Let us make this precise. Define a `covering relation' $\drie$, where for $E\in\mathcal{C}_{A\vee B}$, and $U\subseteq\mathcal{C}_{A\vee B}$, $E\drie U$ means that $E$ is covered by $U$. For each $E\in\mathcal{C}_{A\vee B}$ define $E\drie\{E\}$, and, if we can write $E=C\vee D$ with $C\in\mathcal{C}_{A}$, and $D\in\mathcal{C}_{B}$, then $E\drie\{C,D\}$ as well. However, this relation $\drie$ does not satisfy the necessary conditions for a basis for a Grothendieck topology on $\mathcal{C}_{A\vee B}$, in the sense of~\cite[III.2, Def. 2]{mm}. The obstruction is the stability axiom, which in our setting requires that for any $E\in\mathcal{C}_{A\vee B}$ such that $E\subseteq C\vee D$ for some $C\in\mathcal{C}_{A}$ and $D\in\mathcal{C}_{B}$, one has
\begin{equation*}
E=(E\wedge C)\vee(E\wedge D).
\end{equation*}
Looking at Example~\ref{exa: counter} this condition does not hold. Indeed, the relation $\drie$ does not define a basis for a Grothendieck topology for the same reason that the full sheaf condition of Definition~\ref{def: sheafcond} does not hold.

\subsection{Pinpointing strong locality}

Next, we try to relate the previous locality conditions to the notion of strong locality. Recall:

\begin{dork}
Let $A$ and $B$ be two not necessarily commutative unital C*-subalgebras of some larger C*-algebra $\mathfrak{A}$. Then the pair $(A,B)$ is called \textbf{strongly local} if it satisfies microcausality, and 
\begin{equation} \label{equ: strongloc}
\forall C\in\mathcal{C}_{A}\ \  \forall D\in\mathcal{C}_{B}\ \ (C\vee D)\cap A=C\ \text{and}\ (C\vee D)\cap B=D.
\end{equation}
\end{dork}

We wish to see where strong locality stands in the list of Definition~\ref{def: list}. According to the following lemma it implies extended locality. 

\begin{lem}
If the pair $(A,B)$ is strongly local, then it satisfies extended locality.
\end{lem}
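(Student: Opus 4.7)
The plan is direct: strong locality by definition already includes microcausality, so to establish extended locality it suffices to prove that $A\cap B = \mathbb{C}\cdot 1$. The idea is to feed the trivial context $\mathbb{C}\cdot 1 \in \mathcal{C}_B$ into the second equation of (\ref{equ: counit}) and read off the result.

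First I would reduce to self-adjoint elements. Any $x \in A \cap B$ decomposes as $x = \tfrac{1}{2}(x+x^{\ast}) + i\cdot\tfrac{1}{2i}(x-x^{\ast})$ with both summands self-adjoint and lying in $A\cap B$, so it is enough to show that every self-adjoint element of $A\cap B$ is a scalar multiple of the unit.

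Fix such a self-adjoint $x \in A \cap B$ and let $C := C^{\ast}(x,1)$ be the unital C*-subalgebra of $\mathfrak{A}$ it generates. Since $x$ is normal, $C$ is commutative; since $x \in A$ and $x \in B$ and both contain the unit, $C$ lies inside both $A$ and $B$, hence $C \in \mathcal{C}_A$. Take $D := \mathbb{C}\cdot 1 \in \mathcal{C}_B$, the bottom element of $\mathcal{C}_B$. Because $D \subseteq C$, the join in $\mathcal{C}_{A\vee B}$ is simply $C \vee D = C$.

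Now apply strong locality to the pair $(C,D) \in \mathcal{C}_A \times \mathcal{C}_B$. The second half of (\ref{equ: counit}) gives $(C\vee D)\cap B = D$, i.e.\ $C\cap B = \mathbb{C}\cdot 1$. Since $x \in C \cap B$, we conclude $x \in \mathbb{C}\cdot 1$, completing the reduction and hence the proof. There is no serious obstacle here: the content is just that evaluating the coreflector equality at the bottom element of $\mathcal{C}_B$ collapses the intersection $A\cap B$ to the scalars.
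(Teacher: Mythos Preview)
Your proof is correct and uses essentially the same idea as the paper: feed the trivial context $\mathbb{C}$ into one side of the strong locality identity~(\ref{equ: counit}). The paper argues contrapositively (taking $C=\mathbb{C}\in\mathcal{C}_A$, $D=A\cap B\in\mathcal{C}_B$ and showing the first equation fails) whereas you argue directly with the roles swapped, but the content is the same; incidentally, your reduction to self-adjoint elements is unnecessary, since microcausality already makes $A\cap B$ commutative, so every $x\in A\cap B$ is normal and $C^{\ast}(x,1)$ is automatically a context.
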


\begin{proof}
The proof is this lemma is contrapositive. Assume that $(A,B)$ satisfies microcausality, but not extended locality. Then $A\cap B\neq\mathbb{C}$. Take $C=\mathbb{C}\in\mathcal{C}_{A}$ and $D=A\cap B\in\mathcal{C}_{B}$. Then
\begin{equation*}
(C\vee D)\cap A=(\mathbb{C}\vee(A\cap B))\cap A=(A\cap B)\cap A=A\cap B=D\neq C.
\end{equation*}
Hence, the pair $(A,B)$ is not strongly local.
\end{proof}

Note that if $(A,B)$ are C*-independent in the product sense, then the pair satisfies strong locality. We use this observation to prove that C*-independence implies strong locality.

\begin{lem}
If the pair $(A,B)$ is C*-independent, then $(A,B)$ satisfies strong locality.
\end{lem}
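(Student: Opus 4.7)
The plan is to combine the isomorphism $C\vee D\cong C\otimes D$ supplied by the preceding Proposition with the joint-state characterisation of C*-independence recalled in Subsection~\ref{subsub: spectral}, in order to transfer information from the commutative subalgebra $C\vee D$ back to the ambient algebra $A$. Since C*-independence of $(A,B)$ already entails microcausality by definition, and since $A$ and $B$ play symmetric roles in (\ref{equ: strongloc}), it suffices to establish the identity $(C\vee D)\cap A=C$ for arbitrary $C\in\mathcal{C}_{A}$ and $D\in\mathcal{C}_{B}$; the inclusion $\supseteq$ is immediate.

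First I would apply the preceding Proposition to obtain a $\ast$-isomorphism $C\vee D\cong C\otimes D$ determined by $c\cdot d\mapsto c\otimes d$. For any character $\chi\in\Sigma_{D}$ the slice map $\mathrm{id}_{C}\otimes\chi:C\otimes D\to C$ then yields a $\ast$-homomorphism $\Phi_{\chi}:C\vee D\to C$ satisfying $\Phi_{\chi}(c\cdot d)=\chi(d)\,c$ on elementary products. Commutativity of $C$ and $D$ is essential here: it makes the relevant tensor product unambiguous (both algebras are nuclear) and guarantees that the slice maps are contractive $\ast$-homomorphisms.

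Next I would fix such a $\chi$ and extend it to a state $\tilde\chi$ on $B$ via Hahn--Banach. By C*-independence of $(A,B)$, for any state $\omega_{A}$ on $A$ there exists a joint state $\omega$ on $A\vee B$ with $\omega(ab)=\omega_{A}(a)\tilde\chi(b)$ for all $a\in A$ and $b\in B$. Restricting to elementary products $cd$ with $c\in C$ and $d\in D$ gives $\omega(cd)=\omega_{A}(c)\chi(d)=\omega_{A}(\Phi_{\chi}(cd))$, and by linearity and norm-continuity $\omega(y)=\omega_{A}(\Phi_{\chi}(y))$ for every $y\in C\vee D$.

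Finally, given $x\in(C\vee D)\cap A$, we have $\omega(x)=\omega_{A}(x)$ because $x\in A$ and $\omega$ restricts to $\omega_{A}$ on $A$, while $\omega(x)=\omega_{A}(\Phi_{\chi}(x))$ by the identity just derived. Hence $\omega_{A}(x-\Phi_{\chi}(x))=0$ for every state $\omega_{A}$ of $A$; since the states of $A$ separate points, $x=\Phi_{\chi}(x)\in C$, as required. The main subtle point is the well-definedness of $\Phi_{\chi}$ as a bounded $\ast$-homomorphism on $C\vee D\subseteq A\vee B$, which is handled by the \emph{canonical} nature of the isomorphism $C\vee D\cong C\otimes D$ provided by the Proposition.
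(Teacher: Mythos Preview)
Your proof is correct, but it takes a genuinely different route from the paper's.

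The paper argues by a short bootstrap: set $E=(C\vee D)\cap A$ and $F=(C\vee D)\cap B$; since $(A,B)$ is C*-independent, so is the commutative pair $(E,F)$, hence $(E,F)$ is C*-independent in the product sense, hence strongly local; applying strong locality of $(E,F)$ to $C\in\mathcal{C}_{E}$, $D\in\mathcal{C}_{F}$ yields $(C\vee D)\cap E=C$, and since $E=(C\vee D)\cap A$ this is exactly $(C\vee D)\cap A=C$. The argument is brief because it leans on the previously asserted (but not proved) implication ``C*-independence in the product sense $\Rightarrow$ strong locality''.

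Your approach is more direct and more self-contained: you build the retraction $\Phi_{\chi}:C\vee D\to C$ from a slice map, then use the joint-state characterisation of C*-independence on the \emph{ambient} pair $(A,B)$ to force $\omega_{A}(x)=\omega_{A}(\Phi_{\chi}(x))$ for all states $\omega_{A}$, whence $x=\Phi_{\chi}(x)\in C$. In effect you are proving the ``product-sense $\Rightarrow$ strong locality'' step explicitly, and simultaneously lifting it back to $A$ via the extended state. The paper's version buys brevity and a clean reduction to the commutative case; yours buys transparency about exactly where C*-independence of the full pair $(A,B)$ is used (namely, to get a state on $A\vee B$ restricting to $\omega_{A}$ on all of $A$, not merely on $C$), and it does not depend on the unproved prior observation.
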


\begin{proof}
Let $C\in\mathcal{C}_{A}$ and $D\in\mathcal{C}_{B}$. Define $E=(C\vee D)\cap A$. Then $E$ is a commutative C*-algebra containing $C$ and is contained in $A$. Likewise, $F=(C\vee D)\cap B$ is a commutative C*-algebra that contains $D$, and is contained in $B$. As the pair $(A,B)$ is C*-independent, so is the pair $(E,F)$. Since $E$ and $F$ are commutative, the pair $(E,F)$ is also C*-independent in the product sense. Hence, it is strongly local. Note that
\begin{equation*}
C=(C\vee D)\cap E=(C\vee D)\cap(C\vee D)\cap A=(C\vee D)\cap A,
\end{equation*}
and similarly $D=(C\vee D)\cap B$. We conclude that the pair $(A,B)$ is strongly local as well.
\end{proof}

C*-independence implies strong locality, which in turn implies extended locality. Example~\ref{exa: important} can be used to show that strong locality and C*-independence are not equivalent.

\begin{lem}
The pair $(A,B)$ from Example~\ref{exa: important} satisfies strong locality.
\end{lem}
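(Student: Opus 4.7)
My plan is to prove both equalities in~(\ref{equ: strongloc}) by reducing them to an easy fact about restriction of continuous functions on $[0,1]$. Microcausality is immediate because the ambient algebra $\mathfrak{A} = C([0,1])$ is commutative, so I only need to establish the equations $(C \vee D) \cap A = C$ and $(C \vee D) \cap B = D$.

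The central observation will be that the restriction map $\rho_A \colon \mathfrak{A} \to C([1/2,1])$, $f \mapsto f|_{[1/2,1]}$, identifies $A$ with $C([1/2,1])$ as a C*-algebra: an $f \in A$ is constant on $[0,1/2]$ with value $f(1/2)$, so $\rho_A(f)$ determines $f$ uniquely, and every $h \in C([1/2,1])$ extends to such an $f$ by prescribing its constant value on $[0,1/2]$ to be $h(1/2)$. Under this identification, each $C \in \mathcal{C}_A$ corresponds to a unital C*-subalgebra of $C([1/2,1])$.

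The key step will be to show $\rho_A(C \vee D) \subseteq C$. On the generators this is easy: $\rho_A(C) = C$ by construction, while for any $g \in D \subseteq B$ the restriction $\rho_A(g)$ is a constant function on $[1/2,1]$ (since $g$ is constant there by the definition of $B$), hence lies in $\mathbb{C}\cdot 1 \subseteq C$ by unitality of $C$. Because $\rho_A$ is a continuous $\ast$-homomorphism and $C$ is closed, it follows that $\rho_A$ maps the closed $\ast$-subalgebra generated by $C \cup D$, namely $C \vee D$, into $C$.

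To conclude, take any $f \in (C \vee D) \cap A$. Then $\rho_A(f) \in C$, and since $\rho_A$ restricts to a bijection $A \to C([1/2,1])$ sending the subalgebra $C \subseteq A$ onto the subalgebra $C \subseteq C([1/2,1])$, this forces $f \in C$. The opposite inclusion $C \subseteq (C \vee D) \cap A$ is trivial. An entirely symmetric argument using $\rho_B \colon f \mapsto f|_{[0,1/2]}$ yields $(C \vee D) \cap B = D$. I do not anticipate any genuine obstacle; the only point requiring care is the reduction $\rho_A(D) \subseteq \mathbb{C}\cdot 1$, and this is immediate from the definition of $B$.
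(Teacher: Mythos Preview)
Your argument is correct, and it is genuinely different from the paper's proof. The paper works directly with the Gelfand picture: to each unital C*-subalgebra $E\subseteq C([0,1])$ it associates the partition of $[0,1]$ given by $x\sim_{E}y$ iff $f(x)=f(y)$ for all $f\in E$, and then argues that the partition for $C\vee D$ is the common refinement $[x]=[x]_{C}\cap[x]_{D}$. Showing this requires an explicit construction (writing an arbitrary $h$ in that algebra as $f+g-h(1/2)$ with $f\in C$, $g\in D$), after which the inclusion $(C\vee D)\cap A\subseteq C$ is verified by a case analysis on the equivalence classes. Your route avoids describing $C\vee D$ at all: you push everything through the restriction $\rho_{A}\colon C([0,1])\to C([1/2,1])$, observe that $\rho_{A}$ collapses $D$ to scalars while restricting to an isomorphism on $A$, and then let the elementary fact that a continuous $\ast$-homomorphism maps a generated C*-subalgebra into the C*-subalgebra generated by the images do the work. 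What you gain is brevity and a proof that does not depend on any explicit presentation of $C\vee D$; what the paper's approach gains is a concrete picture of $C\vee D$ in terms of partitions, which may be more informative if one wants to understand the example further. Both are complete; yours is the cleaner one for the bare statement.
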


\begin{proof}
For this proof, we use the following observations. Let $E$ be a unital commutative C*-subalgebra of $C([0,1])$. This algebra $E$ defines a partition of the interval $[0,1]$ into closed subsets by means of the equivalence relation
\begin{equation*}
x\sim_{E} y\ \ \text{iff}\ \ \forall f\in E\ \ f(x)=f(y).
\end{equation*}
The algebra $E$ consists of those $f\in C([0,1])$ for which $f$ is constant on each of the equivalence classes $[x]_{E}$. Conversely, if we have a partition of $[0,1]$ into closed subsets $[x]$, then the set of $f\in C([0,1])$ such that $f$ is constant on each equivalence class $[x]$ defines a C*-subalgebra of $C([0,1])$.

Take $C\in\mathcal{C}_{A}$ and $D\in\mathcal{C}_{D}$. These correspond to partitions $[x]_{C}$ and $[x]_{D}$ of $[0,1]$. Note that by definition of $A$ and $B$, $[1/2]_{C}$ contains the interval $[0,1/2]$, and $[1/2]_{D}$ contains $[1/2,1]$. Define the finer partition consisting of classes $[x]=[x]_{D}\cap[x]_{C}$. Note that if $x\notin[1/2]_{C}$, then $[x]=[x]_{C}$, and if $x\notin[1/2]_{D}$, then $[x]=[x]_{D}$.

Let $E$ be the C*-algebra consisting of those $h\in C([0,1])$ that are constant on each class $[x]$. Define
\begin{equation*}
f:[0,1]\to\mathbb{C},\ \ x\mapsto\left\{
\begin{array}{ll}
h(1/2) & \text{if } x\in[1/2]_{C}\\
h(x) & \text{if } x\notin [1/2]_{C}
\end{array}. \right.
\end{equation*}
Note that $f\in C$. In addition, define
\begin{equation*}
g:[0,1]\to\mathbb{C},\ \ x\mapsto\left\{
\begin{array}{ll}
h(x) & \text{if } x\notin[1/2]_{D}\\
h(1/2) & \text{if } x\in [1/2]_{D}
\end{array}. \right.
\end{equation*}
It follows that $g\in D$. Note that $h=f+g-h(1/2)$. As a consequence $E\subseteq C\vee D$. It is straightforward to verify the converse $C\vee D\subseteq E$. We conclude that $C\vee D= E$. To demonstrate strong locality we need to show that $E\cap A\subseteq C$ and $E\cap B\subseteq D$. Let $h\in E\cap A$. then, for each $x\in[0,1]$, $h$ is constant on $[x]_{D}\cap [x]_{C}$. Let $[x]_{C}\neq[1/2]_{C}$. Then $[x]_{C}\cap[x]_{D}=[x]_{C}$. We conclude that $h$ is constant on each $[x]_{C}$, except possibly $[1/2]_{C}$. As $h\in A$, $h$ is constant on $[0,1/2]$, as well as on $[1/2]_{C}\cap[1/2]_{D}$. Note that
\begin{align*}
[0,1/2]\cup([1/2]_{D}\cap[1/2]_{C})&=([0,1/2]\cup[1/2]_{C})\cap([0,1/2]\cup[1/2]_{D})\\
&=[1/2]_{C}\cap[0,1]=[1/2]_{C}.
\end{align*}
We conclude that $h\in C$. This shows that $(C\vee D)\cap A= C$. The equality $(C\vee D)\cap B= D$ can be proven in the same way.
\end{proof}

It is an open question whether strong locality is equivalent to extended locality, or is a new conditions.

\section{C*-Algebraic version} \label{sec: pullback}

In Section~\ref{sec: sheaf} a sheaf condition was derived using the category $\mathbf{RingSp}$. This meant reducing internal unital commutative C*-algebras to rings. The aim of this section is to formulate the sheaf condition in the category $\mathbf{ucCSp}$ of spaces $X$ and unital commutative C*-algebras internal to $Sh(X)$. The sheaf condition obtained in this way is equivalent to the unit law (\ref{equ: unit}) together with C*-independence.  We start by considering the following C*-algebraic counterpart to $\mathbf{RingTopos}$, which was introduced in~\cite{nuiten}.

\begin{dork}
The category $\mathbf{ucCTopos}$ consists of the following:
\begin{itemize}
\item An \textit{Object} is a pair $(\mathcal{E},\underline{A})$, where $\mathcal{E}$ is a topos and $\underline{A}$ is a unital commutative C*-algebra internal to the topos $\mathcal{E}$.
\item An \textit{arrow} $(G,\underline{g}):(\mathcal{E},\underline{A})\to(\mathcal{F},\underline{B})$ is given by a geometric morphism $G:\mathcal{E}\to\mathcal{F}$ and a $\ast$-homomorphism $\underline{g}:G^{\ast}\underline{B}\to\underline{A}$ in $\mathcal{F}$.
\item Composition of arrows is defined by $(G,\underline{g})\circ(F,\underline{f})=(G\circ F,\underline{f}\circ F^{\ast}\underline{g})$.
\end{itemize}
\end{dork}

As C*-algebras cannot be completely captured by a geometric theory, we cannot guarantee that $G^{\ast}\underline{B}$ is a C*-algebra internal to $\mathcal{F}$. We do know that it is a semi-normed $\ast$-algebra over the complex rationals, where the semi-norm has the C*-property. If by a $\ast$-homomorphism we mean a function that preserves all algebraic structure, then this definition makes sense. But we can question if it is the right definition.

For the semi-normed $\ast$-algebra $G^{\ast}\underline{B}$ we can consider the Cauchy-completion and obtain a C*-algebra. Just as in $\mathbf{Set}$, the algebra $G^{\ast}\underline{B}$ is everywhere norm-dense in its completion. In order to extend a $\ast$-homomorphism (in the algebraic sense given above) $\underline{g}:G^{\ast}\underline{B}\to\underline{A}$ to the completion of $G^{\ast}\underline{B}$ in a continuous way, we require $\underline{g}$ to satisfy
\begin{equation} \label{equ: conti}
\forall b\in G^{\ast}\ \underline{B}\ \forall q\in\mathbb{Q}^{+}\ \ (b,q)\in \uN_{G^{\ast}\underline{B}}\rightarrow (\underline{g}(b),q)\in\uN_{\underline{A}},
\end{equation}
which simply states that $\Vert\underline{g}(b)\Vert\leq\Vert b\Vert$. In the topos $\mathbf{Set}$, this is a necessary condition as every $\ast$-homomorphism between C*-algebras is norm decreasing. We demand that each $\ast$-homomorphism in our category is continuous in this sense. Note that (\ref{equ: conti}) is an expression within geometric logic. If $\underline{f}$ is a continuous $\ast$-homomorphism in a topos $\mathcal{E}$ and $F:\mathcal{F}\to\mathcal{E}$ is a geometric morphism, then $F^{\ast}\underline{f}$ is a continuous $\ast$-homomorphism. This is important, as it entails that continuous $\ast$-homomorphisms are closed under the composition of arrows in $\mathbf{ucCTopos}$. 

\begin{rem}
Let $\uA$ and $\underline{B}$ be semi-normed $\qui$-algebras. If we see $\uA$ and $\underline{B}$ as models of the geometric theory of such algebras, then the structure homomorphisms of these models are precisely the continuous $\ast$-homomorphisms (see e.g.~\cite[Definition 1.20]{vic}).
\end{rem}

In formulating the C*-algebraic version of Nuiten's sheaf condition we use the following subcategory of $\mathbf{ucCTopos}$.

\begin{dork}
The category $\mathbf{ucCSp}$ consists of the following:
\begin{itemize}
\item An \textit{Object} is a pair $(X,\underline{A})$, where $X$ is a topological space and $\underline{A}$ is a unital commutative C*-algebra internal to the topos $Sh(X)$.
\item An \textit{arrow} $(g,\underline{g}):(X,\underline{A})\to(Y,\underline{B})$ is given by a continuous map $g:X\to Y$ and a continuous, unit-preserving $\ast$-homomorphism $\underline{g}:G^{\ast}\underline{B}\to\underline{A}$ in $Sh(Y)$, where $G$ is the geometric morphism induced by $g$.
\item Composition of arrows is defined by $(G,\underline{g})\circ(F,\underline{f})=(G\circ F,\underline{f}\circ F^{\ast}\underline{g})$.
\end{itemize}
\end{dork}

Let $\mathcal{O}\to A(\mathcal{O})$ be an isotonic net of operator algebras, assumed to be unital C*-algebras. Such a net defines a contravariant functor $A:\mathcal{V}(X)^{op}\to\mathbf{ucCSp}$. Let $O_{1}$, and $O_{2}$ be two spacelike separated opens. If we think of $O_{1}\vee O_{2}$ as being covered by $O_{1}$ and $O_{2}$, we want to know under what conditions $A$ is a sheaf, just like in Section~\ref{sec: sheaf}. Recall that the sheaf condition was formulated using the descent morphism. This descent morphism used the ringed topos of matching families, which was defined as a pullback. In the C*-algebraic version, does the corresponding pullback even exist? Proving that such pullbacks exist will be our first step in describing the sheaf condition.

Assume for convenience that the net satisfies extended locality. Using the notation of Section~\ref{sec: sheaf}, consider the poset $\mathcal{C}_{1}\times_{\mathcal{C}_{1\wedge2}}\mathcal{C}_{2}$, which, under the assumption of extended locality, simplifies to $\mathcal{C}_{1}\times\mathcal{C}_{2}$. As before, we use the topos $\mathcal{F}=[\mathcal{C}_{1}\times\mathcal{C}_{2},\mathbf{Set}]$. In this topos, define the internal unital commutative C*-algebra
\begin{equation}
\underline{A}_{1}\otimes\underline{A}_{2}:\mathcal{C}_{1}\times\mathcal{C}_{2}\to\mathbf{Set}\ \ (\underline{A}_{1}\otimes\underline{A}_{2})(C_{1},C_{2})=C_{1}\otimes C_{2},
\end{equation}
where $C_{1}\otimes C_{2}$ is the C*-algebraic tensor product\footnote{Since commutative C*-algebras are nuclear, there is a unique C*-algebraic tensor product completing the algebraic tensor product $C_{1}\odot C_{2}$.}. If $(D_{1},D_{2})\leq(C_{1},C_{2})$, The corresponding $\ast$-homomorphism is the inclusion $D_{1}\otimes D_{2}\hookrightarrow C_{1}\otimes C_{2}$. The pair $(\mathcal{F},\underline{A}_{1}\otimes\underline{A}_{2})$  comes equipped with projection morphisms
\begin{equation*}
(P_{i},\underline{p}_{i}):([\mathcal{C}_{1}\times\mathcal{C}_{2},\mathbf{Set}],\underline{A}_{1}\otimes\underline{A}_{2})\to([\mathcal{C}_{i},\mathbf{Set}],\underline{A}_{i}),
\end{equation*}
where the geometric morphism $P_{i}$ is induced by the projection $\pi_{i}:\mathcal{C}_{1}\times\mathcal{C}_{2}\to\mathcal{C}_{i}$, and 
\begin{equation*}
\underline{p}_{i}:P^{\ast}_{i}\underline{A}_{i}\to\underline{A}_{1}\otimes\underline{A}_{2},
\end{equation*}
\begin{equation*}
(\underline{p}_{1})_{(C_{1},C_{2})}:C_{1}\to C_{1}\otimes C_{2}\ \ a\mapsto a\otimes 1,
\end{equation*}
\begin{equation*}
(\underline{p}_{2})_{(C_{1},C_{2})}:C_{2}\to C_{1}\otimes C_{2}\ \ b\mapsto 1\otimes b.
\end{equation*}

\begin{tut} \label{thm: prod}
The diagram
\[ \xymatrix{
(\mathcal{C}_{1},\underline{A}_{1}) & & (\mathcal{C}_{1}\times\mathcal{C}_{2},\underline{A}_{1}\otimes\underline{A}_{2}) \ar[ll]_{(P_{1},\underline{p}_{1})} \ar[rr]^{(P_{2},\underline{p}_{2})} & & (\mathcal{C}_{2},\underline{A}_{2}), } \]
is a product in $\mathbf{ucCSp}$.
\end{tut}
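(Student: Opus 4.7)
The plan is to verify the universal property of products directly. Given any test object $(X, \underline{S})$ in $\mathbf{ucCSp}$ equipped with morphisms $(f_i, \underline{f}_i) : (X, \underline{S}) \to (\mathcal{C}_i, \underline{A}_i)$ for $i = 1, 2$, I must produce a unique $(h, \underline{h}) : (X, \underline{S}) \to (\mathcal{C}_1 \times \mathcal{C}_2, \underline{A}_1 \otimes \underline{A}_2)$ with $(P_i, \underline{p}_i) \circ (h, \underline{h}) = (f_i, \underline{f}_i)$. The construction factors into an elementary topological step and a more subtle C*-algebraic step.

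For the topological step, the Alexandroff functor preserves finite limits (as already noted in Section~\ref{sec: sheaf}), so $\mathcal{C}_1 \times \mathcal{C}_2$ with its Alexandroff upper topology is a product in $\mathbf{Top}$. Hence there is a unique continuous $h : X \to \mathcal{C}_1 \times \mathcal{C}_2$ with $\pi_i \circ h = f_i$, and a corresponding unique geometric morphism $H : Sh(X) \to \mathcal{F}$ satisfying $P_i \circ H = F_i$, giving $F_i^*\underline{A}_i \cong H^*P_i^*\underline{A}_i$ canonically.

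For the C*-algebraic step, the key structural observation is that $(\underline{A}_1 \otimes \underline{A}_2, \underline{p}_1, \underline{p}_2)$ is the coproduct of $P_1^*\underline{A}_1$ and $P_2^*\underline{A}_2$ in the category of internal unital commutative C*-algebras in $\mathcal{F}$. Because coproducts of commutative unital C*-algebras internal to a presheaf topos $[\mathcal{C}_1 \times \mathcal{C}_2, \mathbf{Set}]$ are computed stagewise, this reduces to the classical fact (Gelfand-dual to the product of compact Hausdorff spaces) that at each stage $(C_1, C_2)$ the C*-tensor product $C_1 \otimes C_2$ is the coproduct of $C_1$ and $C_2$ in commutative unital C*-algebras, with injections $c_1 \mapsto c_1 \otimes 1$ and $c_2 \mapsto 1 \otimes c_2$.

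The main obstacle, and the reason $\mathbf{ucCSp}$ required such careful setup, is that C*-algebras do not form a geometric theory, so $H^*$ need not send the internal C*-algebra $\underline{A}_1 \otimes \underline{A}_2$ to a Cauchy-complete object but only to a semi-normed commutative $\qui$-$*$-algebra with the C*-property. I would bridge this by applying $H^*$ to the internal coproduct diagram: since $H^*$ preserves finite limits, arbitrary colimits, $\qui$, and the norm predicate $\uN$ (all of which are expressible geometrically, as recalled in the Background section), the result is a coproduct diagram of commutative semi-normed $\qui$-$*$-algebras in $Sh(X)$. Pairing the universal property of this coproduct with $(\underline{f}_1, \underline{f}_2)$ yields a unique $*$-homomorphism $\underline{h} : H^*(\underline{A}_1 \otimes \underline{A}_2) \to \underline{S}$; continuity of $\underline{h}$ in the sense of (\ref{equ: conti}) follows from that of $\underline{f}_1$ and $\underline{f}_2$ combined with the fact that the C*-cross-norm on $C_1 \otimes C_2$ is precisely the one characterised by making both coproduct injections contractive. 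Uniqueness of $(h, \underline{h})$ then cascades from the uniqueness at each universal-property step, completing the verification of the product.
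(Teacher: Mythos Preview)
Your overall strategy matches the paper's: reduce to a topological step (product of Alexandroff spaces) and a C*-algebraic step (stagewise coproduct in $\mathcal{F}$), identify the obstacle that $H^{\ast}$ need not preserve C*-structure, and resolve it via the geometric theory of semi-normed $\qui$-$\ast$-algebras. But there is a genuine gap in how you carry out that resolution.

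You assert that applying $H^{\ast}$ to the coproduct diagram yields ``a coproduct diagram of commutative semi-normed $\qui$-$\ast$-algebras in $Sh(X)$,'' citing only that $H^{\ast}$ preserves finite limits, colimits, $\qui$, and the norm predicate. Preserving colimits of underlying objects does not entail preserving coproducts in a category of structured objects; for a general geometric theory there is no such principle, and you offer no argument specific to this one. The paper does not make this claim. Instead it separates the purely algebraic tensor product $\underline{A}_{1}\odot\underline{A}_{2}$ --- a coproduct of $\qui$-$\ast$-algebras, which \emph{is} preserved by $H^{\ast}$ because that theory is essentially algebraic --- and obtains the algebraic map $\underline{h}:H^{\ast}(\underline{A}_{1}\odot\underline{A}_{2})\to\underline{B}$ from its universal property. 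The substantive work is then to prove continuity of $\underline{h}$ by hand: one records, as sentences of geometric logic, that the restricted norm on $\underline{A}_{1}\odot\underline{A}_{2}$ satisfies $\Vert a\otimes 1\Vert=\Vert a\Vert$, is a cross-norm, and in fact coincides with the \emph{projective} cross-norm; each of these survives $H^{\ast}$, and continuity on simple tensors and then on finite sums follows from submultiplicativity and the triangle inequality in $\underline{B}$. Your one-line appeal --- that the C*-cross-norm is ``characterised by making both coproduct injections contractive'' --- is both imprecise (many seminorms make the injections contractive; what matters is that the projective norm is the largest such, and that this extremal description is expressible geometrically) and silent on why any such characterisation transfers along $H^{\ast}$. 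Finally, $\underline{h}$ must be extended from $H^{\ast}(\underline{A}_{1}\odot\underline{A}_{2})$ to $H^{\ast}(\underline{A}_{1}\otimes\underline{A}_{2})$ using that density is a geometric condition and that the target, being an internal C*-algebra, is complete; this extension step is absent from your sketch as well.
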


Before we start with the proof, note the similarity with the pullback in Section~\ref{sec: sheaf}. By extended locality $(\mathcal{T}_{1\wedge2},\underline{A}_{1\wedge2})$, can be identified with the pair $(\mathbf{Set},\mathbb{C})$, which is the terminal object of $\mathbf{ucCSp}$. It is by extended locality that the pullback coincides with the product. 

\begin{proof}

The proof is presented in six steps.
\begin{enumerate}

\item We start by showing that
\[ \xymatrix{
P^{\ast}_{1}\underline{A}_{1} & \underline{A}_{1}\otimes\underline{A}_{2} \ar[r]^{\underline{p}_{1}} \ar[l]_{\underline{p}_{2}} & P^{\ast}_{2}\underline{A}_{2},} \]
is a coproduct of unital commutative C*-algebras in $[\mathcal{C}_{1}\times\mathcal{C}_{2},\mathbf{Set}]$. An exercise in presheaf semantics, or~\cite{svw}, shows that $\underline{B}$ is a unital commutative C*-algebra in $\mathcal{F}$ iff each $\underline{B}(C_{1},C_{2})$ is a unital commutative C*-algebra in $\mathbf{Set}$, and each restriction map is a unit-preserving $\ast$-homomorphism. Note that for each $(C_{1},C_{2})\in\mathcal{C}_{1}\times\mathcal{C}_{2}$, the diagram above gives
\[ \xymatrix{C_{1} & C_{1}\otimes C_{2} \ar[r]^{1\otimes-} \ar[l]_{-\otimes1} & C_{1} }, \]
which is a coproduct of unital commutative C*-algebras in $\mathbf{Set}$. Suppose we have internal $\ast$-homomorphisms (which in this case are automatically norm-continuous) $\underline{f}_{i}:P^{\ast}_{i}\underline{A}_{i}\to\underline{B}$. This provides us with $\ast$-homomorphisms 
\begin{equation*}
\forall (C_{1},C_{2})\in\mathcal{C}_{1}\times\mathcal{C}_{2}\ \ (\underline{f}_{i})_{(C_{1},C_{2})}:C_{i}\to\underline{B}(C_{1},C_{2}).
\end{equation*}
Using the universal property of the stagewise tensor products, we obtain unique maps
\begin{equation*} 
\underline{h}_{(C_{1},C_{2})}:C_{1}\otimes C_{2}\to\underline{B}(C_{1},C_{2}),
\end{equation*}
\begin{equation} \label{equ: tensie}
\underline{h}_{(C_{1},C_{2})}(a\otimes b)=(\underline{f}_{1})_{(C_{1},C_{2})}(a)(\underline{f}_{2})_{(C_{1},C_{2})}(b).
\end{equation}
Do these local $\ast$-homomorphisms $\underline{h}_{(C_{1},C_{2})}$ piece together to a single natural transformation? Let 
\begin{equation*}
\rho:\underline{B}(D_{1}, D_{2})\to\underline{B}(C_{1},C_{2}),\ \ (D_{1},D_{2})\leq(C_{1},C_{2}),
\end{equation*}
denote the map of the functor $\underline{B}$, corresponding to the given inequality of $\mathcal{C}_{1}\times\mathcal{C}_{2}$. Using naturality of $\underline{f}_{1}$, $\underline{f}_{2}$, and (\ref{equ: tensie}), we find that $\rho\circ\underline{h}_{(D_{1},D_{2})}$ and $\underline{h}_{(C_{1},C_{2})}$ agree on the algebraic tensor product $D_{1}\odot D_{2}$. By continuity, they agree on $D_{1}\otimes D_{2}$, showing naturality. Hence, we found an internal $\ast$-homomorphism $\underline{h}:\underline{A}_{1}\otimes\underline{A}_{2}\to\underline{B}$ with the desired properties. Note that $\underline{h}$ is unique, as each component $\underline{h}_{(C_{1},C_{2})}$ is unique.

\item In this step we concentrate on the algebraic part of $\underline{A}_{1}\otimes\underline{A}_{2}$. Since we have established that $\underline{A}_{1}\otimes\underline{A}_{2}$ is a coproduct of unital commutative C*-algebras, we ask how this coproduct behaves under the action of an inverse image functor coming from a geometric morphism. In the setting of rings this was straightforward, as we were dealing with an essentially algebraic theory. The theory of C*-algebras is not geometric, let alone essentially algebraic. Instead of working with the whole C*-algebra, forget about the norm for a moment. We consider $\ast$-algebras over $\qui$. Looking at $P^{\ast}_{1}\underline{A}_{1}$ and $P^{\ast}_{2}\underline{A}_{2}$ as such algebras, we can find their coproduct $\underline{A}_{1}\odot\underline{A}_{2}$ in $[\mathcal{C}_{1}\times\mathcal{C}_{2},\mathbf{Set}]$. As the notation suggests, this coproduct is also computed context-wise, i.e. $(\underline{A}_{1}\odot\underline{A}_{2})(C_{1},C_{2})=C_{1}\odot C_{2}$. Furthermore, if $H:\mathcal{E}\to[\mathcal{C}_{1}\times\mathcal{C}_{2},\mathbf{Set}]$ is a geometric morphism, then $H^{\ast}(\underline{A}_{1}\odot\underline{A}_{2})$ can be identified with the coproduct of $H^{\ast}P^{\ast}_{1}\underline{A}_{1}$ and $H^{\ast}P^{\ast}_{2}\underline{A}_{2}$. 

The functor $H^{\ast}$ preserves coproducts of $\qui$-algebras for the same reasons it preserves coproducts of rings. If $F:\mathcal{E}\to\mathcal{F}$ is any geometric morphism, and $\uA$ is a $\underline{R}$-algebra in $\mathcal{F}$ for some commutative ring, then $F^{\ast}\uA$ is an $F^{\ast}\underline{R}$-algebra in $\mathcal{E}$. If $\underline{B}$ is a $F^{\ast}\underline{R}$-algebra in $\mathcal{E}$ with action $\mu_{\underline{B}}:F^{\ast}\underline{R}\times\underline{B}\to\underline{B}$, then $F_{\ast}\underline{B}$ is a $\underline{R}$-algebra in $\mathcal{F}$ with action
\begin{equation*}
\mu_{F_{\ast}\underline{B}}:\underline{R}\times F_{\ast}\underline{B}\to F_{\ast}\underline{B},\ \ \mu_{F_{\ast}\underline{B}}=F_{\ast}(\mu_{\underline{B}})\circ(\eta_{\underline{R}}\times F_{\ast}\underline{B}),
\end{equation*}
where $\eta_{\underline{R}}:\underline{R}\to F_{\ast}F^{\ast}\underline{R}$ is the unit of the adjunction. In this way, the $F^{\ast}\underline{R}$-algebra homomorphisms $F^{\ast}\uA\to\underline{B}$, in $\mathcal{E}$ correspond, by the adjunction, to $\underline{R}$-algebra homomorphisms $\uA\to F_{\ast}\underline{B}$ in $\mathcal{F}$.

Returning to case at hand; if we are given $\ast$-preserving algebra morphisms $\underline{f}_{i}: H^{\ast}P^{\ast}_{i}\underline{A}_{i}\to\underline{B}$, then there exists a unique $\ast$-preserving algebra morphism $\underline{h}: H^{\ast}(\underline{A}_{1}\odot\underline{A}_{2})\to\underline{B}$ such that $\underline{h}\circ H^{\ast}\underline{p}_{i}=\underline{f}_{i}$. 

\item The previous step showed that $H^{\ast}(\uA_{1}\odot\uA_{2})$, was the coproduct of $H^{\ast}\uA_{1}$ and $H^{\ast}\uA_{2}$, as $\qui$-algebras in $\mathcal{E}$. As such, for $\ast$-preserving $\qui$-algebra homomorphisms $\underline{f}_{i}: H^{\ast}P_{i}^{\ast}\uA_{i}\to\underline{B}$, there exists a corresponding unique $\ast$-preserving $\qui$-algebra homomorphisms $\underline{h}: H^{\ast}(\uA_{1}\odot\uA_{2})\to\underline{B}$. 

From this point onwards we include the norms back into the discussion. We consider $\uA_{1}\odot\uA_{2}$ normed, using the restriction of the norm $\uN$ on $\underline{A}_{1}\otimes\underline{A}_{2}$, to the subobject $\underline{A}_{1}\odot\uA_{2}$. Equipped with this norm, the last three steps are devoted to showing that $\underline{h}$ is norm-continuous whenever both $\underline{f}_{i}$ are norm-continuous. 

Note that with respect to the norm $\uN$ on $\underline{A}_{1}\otimes\underline{A}_{2}$, the subset $\underline{A}_{1}\odot\underline{A}_{2}$ is everywhere dense in the sense 
\begin{equation*}
\forall x\in\underline{A}_{1}\otimes\underline{A}_{2}\ \forall n\in\mathbb{N}\ \ (\exists y\in\underline{A}_{1}\odot\underline{A}_{2}\ \ (x-y,1/n)\in\uN).
\end{equation*}
This is a suitable axiom for a geometric theory, and this remark is relevant in two ways. First, the axiom holds internally for $\underline{A}_{1}\otimes\underline{A}_{2}$, because it holds at each stage $(C_{1},C_{2})$ (see e.g.~\cite[Cor. D1.2.14(i)]{jh1}). Second, it also holds for $H^{\ast}(\underline{A}_{1}\otimes\underline{A}_{2})$ and $H^{\ast}(\underline{A}_{1}\odot\underline{A}_{2})$, relative to the semi-norm $H^{\ast}\uN$. This implies that the elements of $H^{\ast}(\underline{A}_{1}\otimes\underline{A}_{2})$ can be seen as Cauchy-approximations of $H^{\ast}(\underline{A}_{1}\odot\underline{A}_{2})$, relative to $H^{\ast}\uN$. This, in turn, implies that we can extend any norm-continuous $\ast$-homomorphism $\underline{h}: H^{\ast}(\underline{A}_{1}\odot\underline{A}_{2})\to\underline{B}$ to $H^{\ast}(\underline{A}_{1}\otimes\underline{A}_{2})$.

\item For any pair $\underline{f}_{i}: H^{\ast}P^{\ast}_{i}\underline{A}_{i}\to\underline{B}$ of continuous $\ast$-homomorphisms, we show that the $\ast$-preserving algebra morphism $\underline{h}: H^{\ast}(\underline{A}_{1}\odot\underline{A}_{2})\to\underline{B}$ satisfying $\underline{h}\circ H^{\ast}\underline{p}_{i}=\underline{f}_{i}$, is norm-decreasing for elements of the form $H^{\ast}\underline{p}_{1}(a)=:a\otimes 1$ and $H^{\ast}\underline{p}_{2}(b)=:1\otimes b$.

Note that the norm on $\uA_{1}\odot\uA_{2}$ satisfies the following (geometric) condition, expressing that $\Vert a\otimes1\Vert=\Vert a\Vert$, and $\Vert1\otimes b\Vert=\Vert b\Vert$:
\begin{equation*}
\forall a\in P^{\ast}_{i}\underline{A}_{i}\ \ \forall q\in\mathbb{Q}^{+}\ \ (a,q)\in\uN_{P^{\ast}_{i}\underline{A}_{i}}\leftrightarrow(\underline{p}_{i}(a),q)\in\uN_{\underline{A}_{1}\odot\underline{A}_{2}}.
\end{equation*}
Fitting within geometric logic this identity is preserved by $H^{\ast}$.  Consequently,
\begin{equation*}
(H^{\ast}\underline{p}_{i}(a),q)\in\uN_{H^{\ast}(\underline{A}_{1}\odot\underline{A}_{2})}\rightarrow(a,q)\in\uN_{H^{\ast}P_{i}^{\ast}\underline{A}_{i}}\rightarrow(\underline{f}_{i}(a),q)\in\uN_{\underline{B}},
\end{equation*}
where we used that by definition $\uN_{H^{\ast}P^{\ast}_{i}\underline{A}_{i}}=H^{\ast}\uN_{P^{\ast}_{i}\underline{A}_{i}}$. This argument shows that $\underline{h}$ is (semi-)norm-continuous on elements of the form $\underline{p}_{1}(a)=a\otimes1$ and $\underline{p}_{2}(b)=1\otimes b$.

\item Next, we show that $\underline{h}$ is norm-decreasing for simple tensors $a\otimes b=H^{\ast}\underline{p}_{1}(a)\cdot H^{\ast}\underline{p}_{2}(b)$. We pursue the same strategy as in the previous step. Take a property of the tensor product in $\mathbf{Set}$. Describe it geometrically. Then it holds internally to $\mathcal{F}$, and is preserved by $H^{\ast}$. In $\mathbf{Set}$ the norm on the tensor product is a cross-norm, which means that it satisfies $\Vert a\otimes b\Vert=\Vert a\Vert\cdot\Vert b\Vert$. We can reformulate this as as
\begin{align*}
\forall a\in P^{\ast}_{1}\underline{A}_{1}\ \ &\forall b\in P^{\ast}_{2}\underline{A}_{2} \ \ \forall q\in\mathbb{Q}^{+}\ (\underline{p}_{1}(a)\cdot\underline{p}_{2}(b),q)\in\uN_{\underline{A}_{1}\odot\underline{A}_{2}} \leftrightarrow \\
& (\exists p_{1},p_{2}\in\mathbb{Q}^{+} ((a,p_{1})\in\uN_{P^{\ast}_{1}\underline{A}_{1}})\wedge((b,p_{2})\in\uN_{P^{\ast}_{2}\underline{A}_{2}})\wedge(p_{1}\cdot p_{2}<q)).
\end{align*}
This condition states that for any positive rational $q>0$, $\Vert a\otimes1\cdot1\otimes b\Vert<q$ iff there exist rational numbers $p_{1},p_{2}>0$, satisfying
\begin{equation*}
\Vert a\otimes 1\Vert\cdot\Vert1\otimes b\Vert<p_{1}\cdot p_{2}<q.
\end{equation*}
Note that the condition is preserved by $H^{\ast}$, as the axiom can be expressed using geometric logic. 

We are now ready to prove norm-continuity for simple tensors. Let $q>\Vert H^{\ast}\underline{p}_{1}(a)\cdot H^{\ast}\underline{p}_{2}(b)\Vert$. There exist $p_{1}, p_{2}$ as above. If $p_{1}>\Vert a\Vert$, then $p_{1}>\Vert\underline{f}_{1}(a)\Vert$. Likewise, $p_{2}>\Vert\underline{f}_{2}(b)\Vert$. Using submultiplicativity of the norm, we conclude that
\begin{equation*}
q>p_{1}\cdot p_{2}>\Vert\underline{f}_{1}(a)\Vert\cdot\Vert\underline{f}_{2}(b)\Vert\geq\Vert\underline{f}_{1}(a)\cdot\underline{f}_{2}(b)\Vert=\Vert\underline{h}(H^{\ast}\underline{p}_{1}(a)\cdot H^{\ast}\underline{p}_{2}(b))\Vert.
\end{equation*}
This proves norm-continuity of $\underline{h}$ on the simple tensors. 

\item For the last step in demonstrating norm-continuity, we consider linear combinations of simple tensors, i.e. arbitrary elements of the algebraic tensor product. The proof relies on the fact that the norm which we defined on $\underline{A}_{1}\odot\underline{A}_{2}$ is the projective cross-norm. In the topos $\mathbf{Set}$, for two unital commutative C*-algebras $A$ and $B$, a $q\in\mathbb{Q}^{+}$, and an element $x\in A\odot B$, we have $q>\Vert x\Vert$ iff
\begin{equation*}
\exists n\in\mathbb{N}\ \ \exists a_{1},...,a_{n}\in A\ \ \exists b_{1},...,b_{n}\in B\ \ \left(x=\sum_{i=1}^{n}a_{i}\otimes b_{i}\right)\wedge\left(q>\sum_{i=1}^{n}\Vert a_{i}\Vert\cdot\Vert b_{i}\Vert\right).
\end{equation*}
Noting that using finite subsets or finite lists falls within the domain of geometric logic, this property can be described geometrically. Therefore, it can be applied to $H^{\ast}(\underline{A}_{1}\odot\underline{A}_{2})$. Using both the triangle inequality and the submultiplicativity of the norm, one proves continuity of $\underline{h}$ for all elements of $H^{\ast}(\underline{A}_{1}\odot\underline{A}_{2})$.

This proves that $\underline{h}$ can be extended to a continuous $\ast$-homomorphism $\underline{h}':H^{\ast}(\underline{A}_{1}\otimes\underline{A}_{2})\to\underline{B}$. Note that $\underline{h}'$ is unique, as $\underline{h}$ is unique by construction.

To complete the proof of the theorem, consider morphisms in $\mathbf{ucCSp}$
\begin{equation*}
(f_{i},\underline{f}_{i}): (X,\underline{B})\to(\mathcal{C}_{i},\underline{A}_{i}).
\end{equation*}
There exists a unique $h:X\to\mathcal{C}_{1}\times\mathcal{C}_{2}$, such that $f_{i}=\pi_{i}\circ h$. In particular $F^{\ast}_{i}=H^{\ast}P_{i}^{\ast}$. We are given continuous $\ast$-homomorphisms $\underline{f}_{i}:H^{\ast}P^{\ast}_{i}\underline{A}_{i}\to\underline{B}$. By the previous reasoning, there exists a unique $\underline{h}:H^{\ast}(\underline{A}_{1}\otimes\underline{A}_{2})\to\underline{B}$ such that $\underline{h}\circ H^{\ast}\underline{p}_{i}=\underline{f}_{i}$. The pair $(h,\underline{h})$ is the unique arrow such that $(f_{i},\underline{f}_{i})=(\pi_{i},\underline{p}_{i})\circ(h,\underline{h})$.
\end{enumerate}
\end{proof}

Using the product of the previous theorem, we can write down the descent morphism.

\begin{poe} \label{poe: descent}
In $\mathbf{ucCSp}$, the descent morphism is given by
\begin{equation*}
(h,\underline{h}):([\mathcal{C}_{1\vee2},\mathbf{Set}],\underline{A}_{1\vee2})\to([\mathcal{C}_{1}\times\mathcal{C}_{2},\mathbf{Set}],\underline{A}_{1}\otimes\underline{A}_{2}),
\end{equation*}
\begin{equation*}
h:\mathcal{C}_{1\vee2}\to\mathcal{C}_{1}\times\mathcal{C}_{2}\ \ C\mapsto(C\cap A_{1},C\cap A_{2}),
\end{equation*}
\begin{equation*}
\underline{h}:H^{\ast}(\underline{A}_{1}\otimes\underline{A}_{2})\to\underline{A}_{1\vee2},
\end{equation*}
\begin{equation*}
\underline{h}_{C}:(C\cap A_{1})\otimes(C\cap A_{2})\to C\ \ a\otimes b\mapsto a\cdot b.
\end{equation*}
\end{poe}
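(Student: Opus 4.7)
My plan is to deduce this proposition directly from the universal property of the product established in Theorem~\ref{thm: prod}, using the morphisms $(J_i,\underline{j}_i): ([\mathcal{C}_{1\vee2},\mathbf{Set}],\underline{A}_{1\vee2}) \to ([\mathcal{C}_i,\mathbf{Set}],\underline{A}_i)$ induced by the inclusions $A_i\hookrightarrow A_{1\vee2}$ as the two legs feeding into the product. Concretely, $J_i$ is the geometric morphism induced by the poset map $y_i:\mathcal{C}_{1\vee2}\to\mathcal{C}_i$, $y_i(C)=C\cap A_i$, and $\underline{j}_i$ is stagewise the inclusion $C\cap A_i\hookrightarrow C$. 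Since $(\mathcal{C}_{1\vee2},\underline{A}_{1\vee2})$ lies in $\mathbf{ucCSp}$ (contexts are commutative), we are genuinely handing the universal property a cone, so there is a unique $(h,\underline{h})$ with $(\pi_i,\underline{p}_i)\circ(h,\underline{h})=(J_i,\underline{j}_i)$.

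First I would identify the underlying continuous map. The condition $\pi_i\circ h = y_i$ forces $h(C)=(y_1(C),y_2(C))=(C\cap A_1,C\cap A_2)$ at the level of posets, and because the Alexandroff functor $Al$ converts this poset map into a continuous map, $h$ is precisely the stated continuous map. Next I would compute $\underline{h}$ stagewise. At a stage $C\in\mathcal{C}_{1\vee2}$, the inverse image $H^{\ast}(\underline{A}_1\otimes\underline{A}_2)$ evaluates to $(C\cap A_1)\otimes(C\cap A_2)$, and the constraint $\underline{h}\circ H^{\ast}\underline{p}_i=\underline{j}_i$ becomes
\begin{equation*}
\underline{h}_C(a\otimes 1)=a\quad(a\in C\cap A_1),\qquad \underline{h}_C(1\otimes b)=b\quad(b\in C\cap A_2).
\end{equation*}
Multiplicativity of $\underline{h}_C$ then forces $\underline{h}_C(a\otimes b)=a\cdot b$ on simple tensors, exactly as displayed.

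The main thing to check carefully is that this prescription really defines a continuous $\ast$-homomorphism $(C\cap A_1)\otimes(C\cap A_2)\to C$. Since $C$ itself is commutative and contains both $C\cap A_1$ and $C\cap A_2$, the elements of these two subalgebras commute in $C$, so the bilinear map $(a,b)\mapsto a\cdot b$ into $C$ is a well-defined $\ast$-algebra homomorphism out of the algebraic tensor product $(C\cap A_1)\odot(C\cap A_2)$. Because all algebras involved are commutative (hence nuclear), there is a unique C*-norm on the algebraic tensor product, and any $\ast$-homomorphism into a C*-algebra is automatically norm-decreasing, so $\underline{h}_C$ extends uniquely and continuously to the C*-tensor product. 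Naturality in $C$ is immediate from the fact that the inclusion $D\hookrightarrow C$ for $D\leq C$ in $\mathcal{C}_{1\vee2}$ restricts compatibly to $D\cap A_i\hookrightarrow C\cap A_i$, and the multiplication maps intertwine these inclusions.

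I expect the only real subtlety to be the continuity step above: one must be slightly careful that $\underline{h}_C$ defined on the algebraic tensor product does extend to the unique completion $(C\cap A_1)\otimes(C\cap A_2)$ appearing in the functor $\underline{A}_1\otimes\underline{A}_2$. Once this is granted, uniqueness of $(h,\underline{h})$ is automatic from Theorem~\ref{thm: prod}, and the remaining verifications — that $\underline{h}$ as described is a natural transformation and that $(h,\underline{h})$ lands in $\mathbf{ucCSp}$ — are routine.
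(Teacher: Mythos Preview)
Your proposal is correct and matches the paper's approach: the paper treats this proposition as an immediate consequence of Theorem~\ref{thm: prod}, simply stating that ``using the product of the previous theorem, we can write down the descent morphism'' without further proof, and the identification $\underline{h}_{C}(a\otimes b)=a\cdot b$ via multiplicativity is exactly the computation the paper gives (after the analogous Lemma in the $\mathbf{RingSp}$ setting). Your additional care about extending $\underline{h}_{C}$ continuously from the algebraic to the C*-tensor product is appropriate and was effectively handled inside the proof of Theorem~\ref{thm: prod} itself.
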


Using this description of the descent morphism, we now come to our main and final result, which relates the C*-algebraic version of Nuiten's sheaf condition to C*-independence.

\begin{tut} \label{tut: crux}
Let $(A,B)$ satisfy extended locality. Then the pair $(A,B)$ satisfies the sheaf condition in $\mathbf{ucCSp}$ iff the pair is C*-independent, and satisfies
\begin{equation} \label{equ: couni}
\forall C\in\mathcal{C}_{A\vee B}\ \ (C\cap A)\vee(C\cap B)=C.
\end{equation}
\end{tut}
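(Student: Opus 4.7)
The plan is to show that an isomorphism $(h,\underline{h})$ in $\mathbf{ucCSp}$ decomposes into two independent pieces and to match each piece with one of the two hypotheses. Concretely, $(h,\underline{h})$ is an iso iff (a) $h\colon \mathcal{C}_{A\vee B}\to \mathcal{C}_{A}\times\mathcal{C}_{B}$ is an order isomorphism (so an Alexandroff homeomorphism), and (b) for every $C\in\mathcal{C}_{A\vee B}$ the Set-valued component $\underline{h}_{C}\colon (C\cap A)\otimes (C\cap B)\to C$, $a\otimes b\mapsto a\cdot b$ is a $\ast$-isomorphism of commutative unital C*-algebras. I would analyze piece (b) first: its image is by definition the C*-subalgebra $(C\cap A)\vee (C\cap B)\subseteq C$, so surjectivity of $\underline{h}_{C}$ for every $C$ is literally the hypothesis (\ref{equ: couni}). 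Granted (\ref{equ: couni}), $\underline{h}_{C}$ is the canonical multiplication map $(C\cap A)\otimes (C\cap B)\to (C\cap A)\vee (C\cap B)$, which is injective iff $(C\cap A, C\cap B)$ is C*-independent in the product sense; by the proposition characterizing C*-independence at the level of contexts, this for all $C$ is equivalent to C*-independence of the pair $(A,B)$.

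For the forward direction I would derive the two hypotheses from the sheaf condition as follows. Surjectivity of $\underline{h}_{C}$ for all $C$ is exactly (\ref{equ: couni}). Surjectivity of $h$ as a poset map produces, for any $C_{1}\in\mathcal{C}_{A}$, $C_{2}\in\mathcal{C}_{B}$, a commutative $C\in\mathcal{C}_{A\vee B}$ with $C_{1},C_{2}\subseteq C$, forcing $[C_{1},C_{2}]=0$. Since normal operators lie in contexts and span $A$ and $B$, this yields microcausality. Then for arbitrary $C_{1}\in\mathcal{C}_{A}$, $C_{2}\in\mathcal{C}_{B}$, the element $C_{1}\vee C_{2}$ lives in $\mathcal{C}_{A\vee B}$, and the induced $\underline{h}_{C_{1}\vee C_{2}}$ is a $\ast$-iso $C_{1}\otimes C_{2}\to C_{1}\vee C_{2}$, which via the Section~\ref{sec: locality} proposition gives C*-independence of $(A,B)$.

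For the reverse direction, assume C*-independence and (\ref{equ: couni}). By the lemma in Section~\ref{sec: locality}, C*-independence implies strong locality, i.e.\ $(C_{1}\vee C_{2})\cap A=C_{1}$ and $(C_{1}\vee C_{2})\cap B=C_{2}$ for all $C_{1}\in\mathcal{C}_{A}$, $C_{2}\in\mathcal{C}_{B}$; combined with (\ref{equ: couni}) this shows the unit and counit of the adjunction $\vee\dashv h$ (available by microcausality) are identities, so $h$ is a poset isomorphism. For the ring part, (\ref{equ: couni}) gives surjectivity of each $\underline{h}_{C}$, while C*-independence supplies injectivity as above. Hence $(h,\underline{h})$ is an isomorphism in $\mathbf{ucCSp}$.

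The main obstacle I anticipate is the ``decomposition lemma'' underlying the whole argument: namely that an isomorphism in $\mathbf{ucCSp}$ between objects of the form $([\mathcal{C},\mathbf{Set}],\underline{A})$ really does unpack into a poset iso together with pointwise C*-algebra isos. This requires checking that, under Alexandroff, a homeomorphism of such spaces is an order iso, that the inverse continuous $\ast$-homomorphism internally corresponds stage-wise to the inverses of the $\underline{h}_{C}$'s, and that the tensor product $(C\cap A)\otimes (C\cap B)$ appearing stage-wise really is the one that makes $\underline{h}_{C}$ the multiplication map (this uses Theorem~\ref{thm: prod} and Proposition~\ref{poe: descent}). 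Once this bookkeeping is in place, the remainder is a direct combination of already-proved results from Sections~\ref{sec: sheaf} and~\ref{sec: locality}.
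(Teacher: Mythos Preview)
Your proposal is correct and follows essentially the same strategy as the paper: for the reverse direction you use that C*-independence implies strong locality, combine this with (\ref{equ: couni}) to make $h$ a poset isomorphism with inverse $\vee$, and then invert $\underline{h}$ stage-wise using C*-independence; this is exactly what the paper does. The paper is considerably terser---it dismisses the forward direction as ``straightforward''---whereas you spell it out. Two small remarks: microcausality need not be rederived from surjectivity of $h$, since it is already part of the standing hypothesis of extended locality; and your appeal to the context-level proposition for C*-independence is not quite literal (that proposition quantifies over $C\in\mathcal{C}_{A}$, $D\in\mathcal{C}_{B}$, not over $C\in\mathcal{C}_{A\vee B}$), but the missing step---restricting the Schlieder property from $(C\cap A,C\cap B)$ to arbitrary $(C_{1},C_{2})$ via $C=C_{1}\vee C_{2}$---is immediate once strong locality is in hand. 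Your caution about the ``decomposition lemma'' is well placed and is precisely the bookkeeping the paper also leaves implicit.
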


\begin{proof}
Let $(A,B)$ satisfy the conditions of Theorem~\ref{tut: crux}. We will define an inverse $(j,\underline{j})$ for the descent morphism $(h,\underline{h})$. C*-independence implies strong locality. Combined with (\ref{equ: couni}), we deduce that the poset morphism $h$ from Proposition~\ref{poe: descent} is an isomorphism. The inverse for $h$ is given by $\vee$, so we simply define $j=\vee$. The poset map $j$ defines an essential geometric morphism $J$, with inverse image functor $J^{\ast}:[\mathcal{C}_{1\vee2},\mathbf{Set}]\to[\mathcal{C}_{1}\times\mathcal{C}_{2},\mathbf{Set}]$. In order to define
\begin{equation*}
\underline{j}:J^{\ast}\underline{A\vee B}\to\underline{A}\otimes\underline{B},
\end{equation*}
for each $(C,D)\in\mathcal{C}_{A}\times\mathcal{C}_{B}$ define
\begin{equation*}
\underline{j}_{(C,D)}:C\vee D\to C\otimes D
\end{equation*}
to be the inverse of the isomorphism of commutative C*-algebras $\underline{h}_{C\vee D}$. Here we used C*-independence and (\ref{equ: couni}) to deduce that $\underline{h}_{C\vee D}$ is indeed an isomorphism $C\otimes D\to C\vee D$. We conclude that the pair $(A,B)$ satisfies the sheaf condition.

The converse claim, to the effect that the sheaf condition implies the conditions of the theorem is straightforward to prove.
\end{proof}

\section{Discussion}

Because of the counit law (\ref{equ: couni}), the full sheaf condition cannot be assumed for physically reasonable nets $O\mapsto A(O)$. The decent morphism $(H,\underline{h})$ does satisfy the weaker conditions that $H$ is a local geometric morphism, and for each context of the form $C\vee D$, the $\ast$-homomorphism $\underline{h}_{C\vee D}$ is an isomorphism of C*-algebras.

The calculations in this paper used the ad hoc simplification where we considered only topoi of the form $Sh(X)$, where $X$ is a topological space. This is because the more complex product of topoi (to be precise, we can consider the product of locales) is then replaced by the simpler product of topological spaces. It would be of interest to investigate the sheaf condition using the larger category of all topoi with internal C*-algebras. In particular, it would be interesting how to see how this affects the counit law (\ref{equ: couni}).

Thinking in another direction it would also be of interest to connect the work in this paper to~\cite{bfir}, which takes a functorial perspective on C*-independence as well, but investigates it as a monoidal structure rather than as a sheaf condition.

\begin{appendix}

\section{Internal C*-algebras} \label{sec: internal}

Let $\mathcal{E}$ be a topos with natural number object, and $\uA$ an object of this topos. In addition, let $\qui$ denote the complexified rational numbers of $\mathcal{E}$. In the definition of a C*-algebra in a topos we make use of $\qui$ for scalar multiplication instead of $\underline{\mathbb{C}}$, the complexified Dedekind real numbers. This is because $\qui$ is preserved under the action of inverse image functors, whereas $\underline{\mathbb{C}}$ is generally not. 

In what follow we will use shorthand notation such as $\forall a,b\in\uA$ for $\forall a\in\uA,\ \forall b\in\uA$. We can now start with the definition of a C*-algebra in $\mathcal{E}$, based on~\cite{banmul3}. First of all, $\uA$ is a $\qui$-vector space. This means that there are arrows
\begin{equation*}
+:\uA\times\uA\to\uA,\ \ \cdot:\qui\times\uA\to\uA,\ \ 0:\underline{1}\to\uA,
\end{equation*}
defining addition, scalar multiplication and the constant $0$. These should satisfy the usual axioms for a vector space such as
\begin{equation*}
\forall a,b,c\in\uA\ \left((a+b)+c=a+(b+c)\right),
\end{equation*}
\begin{equation*}
\forall a\in\uA\ a+0=a.
\end{equation*}
In addition, there is a multiplication $\cdot:\uA\times\uA\to\uA$ satisfying the axioms which make $\uA$ into a $\qui$-algebra. We used the notation $\cdot$ for multiplication as well as scalar multiplication, hoping that this will not lead to confusion.
 
There is an arrow $\ast:\uA\to\uA$, which is involutive
\begin{equation*}
\forall a\in\uA\ (a^{\ast})^{\ast}=a,
\end{equation*}
conjugate linear,
\begin{equation*}
\forall a,b\in\uA\ (a+b)^{\ast}=a^{\ast}+ b^{\ast},
\end{equation*}
\begin{equation*}
\forall a\in\uA,\ \forall x\in\qui\ (x\cdot a)^{\ast}=\bar{x}\cdot a^{\ast},
\end{equation*}
where $\bar{(\cdot)}:\qui\to\qui$ is the conjugation map $x+iy\mapsto x-iy$. The involution is antimultiplicative
\begin{equation*}
\forall a,b\in\uA\ (a\cdot b)^{\ast}=b^{\ast}\cdot a^{\ast}.
\end{equation*}
In the topos $\mathbf{Set}$, the norm is defined as a map $\Vert\cdot\Vert: A\to[0,\infty)$. Equivalently, it can be described as a subset $N\subset A\times\mathbb{Q}^{+}$, where $(a,p)\in N$ iff $\Vert a\Vert<p$. For C*-algebras in arbitrary topoi, we use the subset description as it is formulated using rational numbers. A norm on $\uA$ is a subobject $\uN\subseteq\uA\times\qup$ satisfying the axioms stated below. The axiom
\begin{equation*}
\forall p\in\qup\ \ (0,p)\in\uN
\end{equation*}
expresses $\Vert0\Vert=0$. The fact that $\Vert a\Vert=0$ implies $a=0$, stating that a given semi-norm is in fact a norm, is expressed as
\begin{equation*}
\forall a\in\uA\left(\left(\forall p\in\qup\ (a,p)\in\uN\right)\to(a=0)\right).
\end{equation*}
Note that because of the second universal quantifier, this axiom does not fit within the constraints of geometric logic. The following two axioms express that the norm $\uN$ can be seen as a mapping $\Vert\cdot\Vert:\uA\to\underline{[0,\infty]}_{u}$ (see e.g.~\cite{vic05}). The subscript $u$ indicates that we are using upper real numbers here, one of the various kinds of real numbers in a topos. As the internal mathematics of a topos is constructive, different ways of constructing real numbers out of the rational numbers can result in different objects~\cite[Section D4.7]{jh1}. The first axiom excludes the possibility that the $\Vert a\Vert$ is equal to the upper real number $\infty$.
\begin{equation*}
\forall a\in\uA\ \exists p\in\qup\ \ (a,p)\in\uN,
\end{equation*}
\begin{equation*}
\forall a\in\uA\ \forall p\in\qup\ \left((a,p)\in\uN\leftrightarrow\left(\exists q\in\qup\ (p>q)\wedge((a,q)\in\uN)\right)\right).
\end{equation*}
The equality $\Vert a\Vert=\Vert a^{\ast}\Vert$ follows from the involutive property of $\ast$ and the axiom
\begin{equation*}
\forall a\in\uA\ \forall p\in\qup\ \left((a,p)\in\uN\to(a^{\ast},p)\in\uN\right).
\end{equation*}
The triangle inequality $\Vert a+b\Vert\leq\Vert a\Vert+\Vert b\Vert$ is expressed by the axiom
\begin{equation*}
\forall a,b\in\uA\ \forall p,q\in\qup\ \left(\left((a,p)\in\uN\wedge(b,q)\in\uN\right)\to(a+b,p+q)\in\uN\right).
\end{equation*}
Submultiplicativity of the norm $\Vert a\cdot b\Vert\leq\Vert a\Vert\cdot\Vert b\Vert$ is expressed by the axiom
\begin{equation*}
\forall a,b\in\uA\ \forall p,q\in\qup\ \left(\left((a,p)\in\uN\wedge(b,q)\in\uN\right)\to(a\cdot b,p\cdot q)\in\uN\right).
\end{equation*}
The property $\Vert x\cdot a\Vert=\vert x\vert\cdot\Vert a\Vert$ is expressed as
\begin{equation*}
\forall a\in\uA\ \forall x\in\qui\ \forall p,q\in\qup\ \left(\left((a,p)\in\uN\wedge(\vert x\vert<q)\right)\to(x\cdot a,p\cdot q)\in\uN\right),
\end{equation*}
where we used the modulus map
\begin{equation*}
\vert\cdot\vert:\qui\to\qup\ \ x+iy\mapsto x^{2}+y^{2}.
\end{equation*}
The special property $\Vert a\Vert^{2}=\Vert a\cdot a^{\ast}\Vert$ is given by
\begin{equation*}
\forall a\in\uA\ \forall p\in\qup\ \left((a,p)\in\uN\to(a\cdot a^{\ast},p^{2})\in\uN\right).
\end{equation*}
The algebra $\uA$ is required to be complete with respect to the norm $\uN$. This can be expressed using \textbf{Cauchy approximations}. Let $\mathcal{P}\uA$ denote the power object of $\uA$. A sequence $C:\underline{\mathbb{N}}\to\mathcal{P}\uA$ is a Cauchy approximation if it satisfies the following two axioms
\begin{equation*}
\forall n\in\underline{\mathbb{N}}\ \exists a\in\uA\ (a\in C(n)),
\end{equation*}
\begin{equation*}
\forall k\in\underline{\mathbb{N}}\ \exists m\in\underline{\mathbb{N}}\ \forall n,n'\geq m\ \left( (a\in C(n))\wedge(b\in C(n'))\to(a-b,1/k)\in\uN\right).
\end{equation*}
Note that the first axiom simply states that each set $C(n)$ is non-empty, whereas the second axiom is the characterising property of Cauchy sequences. The difference between Cauchy sequences and Cauchy approximations is that the first uses singleton subsets of the algebras, whereas the second uses non-empty sets. Note that for the second axiom we used the shorthand notation $\forall n,n'\geq m$, meaning
\begin{equation*}
\forall n\in\underline{\mathbb{N}}\ \forall n'\in\underline{\mathbb{N}}\ (n\geq m)\wedge(n'\geq m)\to.
\end{equation*}
The normed algebra $\uA$ is complete if each Cauchy approximation converges to a unique element of $\uA$. Given a Cauchy approximation $C$ and an element $a\in\uA$, we say that $C$ converges to $a$ iff
\begin{equation*}
\forall k\in\underline{\mathbb{N}}\ \exists m\in\underline{\mathbb{N}}\ \forall n\geq m\ \left(b\in C(n)\to(b-a,1/k)\in\uN\right).
\end{equation*}
We briefly use the following notation to reduce the size of the formulae involved. Given a sequence $C:\underline{\mathbb{N}}\to\mathcal{P}\uA$, let $\psi(C)$ denote the proposition that $C$ is a Cauchy approximation (i.e. the conjunction of the two axioms given above). For a sequence $C$ of subsets of $\uA$, and $a\in\uA$ let $\phi(C,a)$ denote the proposition stating that $C$ converges to $a$. The normed algebra $\uA$ is complete iff it satisfies
\begin{equation*}
\forall C\in\mathcal{P}\uA^{\underline{\mathbb{N}}}\ \psi(C)\to\left(\exists a\in\uA\ \phi(C,a)\right),
\end{equation*}
\begin{equation*}
\forall a,b\in\uA\ \forall C\in\mathcal{P}\uA^{\underline{\mathbb{N}}}\ \psi(C)\to\left(\phi(C,a)\wedge\phi(C,b)\to a=b\right).
\end{equation*}
This completes the definition of a C*-algebra in $\mathcal{E}$.\\

A C*-algebra is called commutative if it satisfies the additional axiom
\begin{equation*}
\forall a,b\in\uA\ a\cdot b=b\cdot a.
\end{equation*}
A C*-algebra is called unital, if there is a constant $1:\underline{1}\to\uA$, satisfying the axioms
\begin{equation*}
\forall a\in\uA\ \ a\cdot1=a=1\cdot a,
\end{equation*}
\begin{equation*}
\forall p\in\qup\ \left((p>1)\to(1,p)\in\uN\right).
\end{equation*}

\end{appendix}


\begin{thebibliography}{99}
\bibitem{ab} Samson Abramsky and Adam Brandenburger. The Sheaf-Theoretic Structure of Non-Locality and Contextuality. http://arxiv.org/abs/1102.0264v7. 2011.
\bibitem{amsb} Samson Abramsky, Shane Mansfield and Rui Soares Barbosa. The Cohomology of Non-Locality and Contextuality. QPL 2011 proceedings. 2011.
\bibitem{bm} Bernhard Banaschewski and Chris Mulvey. The Spectral Theory of Commutative C*-algebras: The Constructive Spectrum. Quaest. Math., 23, 425-464. 2000.
\bibitem{banmul3} Bernhard Banaschewski and Christopher J. Mulvey. A Globalisation of the Gelfand Duality Theorem. Ann. Pure Appl. Logic. 137(1-3), 62-103. 2006.
\bibitem{bell} John L. Bell. Toposes and Local Set Theories. An Introduction. Dover Publications Inc. 1988.  
\bibitem{bor} Francis Borceux. Handbook of Categorical Algebra 3: Categories of Sheaves. Encyclopedia of Mathematics and its Applications 52. Cambridge University Press. 1994.
\bibitem{bfir} Romeo Brunetti, Klaus Fredenhagen, Paniz Imani and Katarzyna Rejzner. The Locality Axiom in Quantum Field Theory and Tensor Products of C*-algebras. arXiv:1206.5484. 2012.
\bibitem{butish1} Jeremy Butterfield and Chris Isham. A Topos Perspective on the Kochen-Specker Theorem: I. Quantum States as Generalized Valuations. Int. J. Theor. Phys. 37(11),2669-2733. 1998.
\bibitem{butish2} Jeremy Butterfield and Chris Isham. A Topos Perspective on the Kochen-Specker Theorem: II. Conceptual Aspects and Classical Analogues. Int. J. Theor. Phys. 38(3), 827-859. 1999.
\bibitem{buhais} Jeremy Butterfield, John Hamilton and Chris Isham. A Topos Perspective on the Kochen-Specker Theorem: III. von Neumann algebras as the Base Category. Int. J. Theor. Phys. 39, 1413-1436. 2000.
\bibitem{butish3} Jeremy Butterfield and Chris Isham. A Topos Perspective on the Kochen-Specker Theorem: IV. Interval Valuations. Int. J. Theor. Phys 41, 613-639. 2002.  
\bibitem{dixmier} Jacques Dixmier. C*-algebras. North-Holland Publishing Company. 1977.
\bibitem{doering} Andreas D\"oring. Generalized Gelfand Spectra of Nonabelian Unital C*-Algebras 1: Categorical Aspects, Automorphisms and Jordan Structure. ArXiv: 1212.2613. 2012.
\bibitem{di1} Andreas D\"oring and Chris Isham. A Topos Foundation for Theories of Physics: I. Formal Languages for Physics. J. Math. Phys. 49, Issue 5, 053515. arXiv:quant-ph/0703060. 2008.
\bibitem{di2} Andreas D\"oring and Chris Isham. A Topos Foundation for Theories of Physics: II. Daseinisation and the Liberation of Quantum Theory. J. Math. Phys. 49, Issue 5, 053516. arXiv:quant-ph/0703062. 2008.
\bibitem{di3} Andreas D\"oring and Chris Isham. A Topos Foundation for Theories of Physics: III. Quantum Theory and the Representation of Physical Quantities with Arrows $\delta(A)$ J. Math. Phys. 49, Issue 5, 053517. arXiv:quant-ph/0703064. 2008.
\bibitem{di4} Andreas D\"oring and Chris Isham. A Topos Foundation for Theories of Physics: IV. Categories of Systems. J. Math. Phys. 49, Issue 5, 053518. arXiv:quant-ph/0703066. 2008.
\bibitem{di} Andreas D\"oring and Chris Isham. What is a Thing?. New Structures for Physics, Lecture Notes in Physics. Springer. ArXiv: 0803.0417v1. 2009.
\bibitem{effroslance} Edward G. Effros and E. Christopher Lance. Tensor products of operator algebras. Advances in Mathematics 25, 1-34. 1977.
\bibitem{hamhalter} Jan Hamhalter. Quantum measure theory. Fundamental Theories of Physics. Kluwer Academic Publishers. 2003.
\bibitem{hamhalter2} Jan Hamhalter. Isomorphisms of Ordered Structures of Abelian C*-subalgebras of C*-algebras. J. Math. Anal. Appl. 383, 391-399. 2011.
\bibitem{hado} John Harding and Andreas D\"oring. Abelian Subalgebras and the Jordan Structure of a von Neumann Algebra. arXiv:1009.4945. 2010.
\bibitem{hls} Chris Heunen, Nicolaas P. Landsman and Bas Spitters. A Topos for Algebraic Quantum Theory. Communications in Mathematical Physics 291. 63-110. 2009.
\bibitem{ish} Chris Isham. Topos Methods in the Foundations of Physics. Deep Beauty: Mathematical Innovation and Research for Underlying Intelligibility in the Quantum World, ed. Hans Halvorson, Cambridge University Press. 2011. 
\bibitem{jh1} Peter T. Johnstone. Sketches of an Elephant: A Topos Theory Compendium. Oxford University Press. 2002.
\bibitem{kari2} Richard V. Kadison and John R. Ringrose. Fundamentals of the Theory of Operator Algebras Vol.2 . Academic Press. 1983.
\bibitem{lasc} Joachim Lambek and Philip Scott. Introduction to Higher Order Categorical Logic. Cambridge Studies in Advanced Mathematics 7. 1999.
\bibitem{mm} Saunders Mac Lane and Ieke Moerdijk. Sheaves in Geometry and Logic; A First Introduction to Topos Theory. Springer. 1992.
\bibitem{nuiten} Joost Nuiten. Bohrification of Local Nets of Observables. Full thesis: arXiv:1109.1397v1. 2011. A shorter version was published in: Proceedings 8th International Workshop on  Quantum Physics and Logic. Edited by: Bart Jacobs, Peter Selinger and Bas Spitters. 237-244. 2011.
\bibitem{roos} Hansj\"org Roos. Independence of Local Algebras in Quantum Field Theory. Commun. Math. Phys. 16, 238-246. 1970.
\bibitem{svw} Bas Spitters, Steven Vickers and Sander Wolters. Gelfand Spectra in Grothendieck Toposes using Geometric Mathematics. Submitted to QPL Proceedings.
\bibitem{sum} Stephen J. Summers. On the Independence of Local Algebras in Quantum Field Theory. Rev. Math. Phys. 2(2), 201-247. 1990.
\bibitem{vic05} Steven Vickers. Localic Completion of Generalized Metric Spaces I. Theory and Applications of Categories 14. 328-356. 2005.
\bibitem{vic} Steven Vickers. Locales and Toposes as Spaces. Handbook of Spatial Logics. 429-496. Springer. 2007.
\bibitem{vic7} Steven Vickers. A Monad of Valuation Locales. http://www.cs.bham.ac.uk/~sjv/Riesz.pdf. 2011.
\bibitem{wollie} Sander A.M. Wolters. A Comparison of Two Topos-Theoretic Approaches to Quantum Theory. Commun. Math. Phys 317(1) p3-53. 2013.
\bibitem{wollie2} Sander A.M. Wolters. Topos Models for Physics and Topos Theory. In preparation. 
\end{thebibliography}
\end{document}